\documentclass{article}

\usepackage[preprint]{neurips_2026}

\usepackage[utf8]{inputenc} 
\usepackage[T1]{fontenc}    
\usepackage{hyperref}       
\usepackage{url}            
\usepackage{booktabs}       
\usepackage{amsfonts}       
\usepackage{nicefrac}       
\usepackage{microtype}      
\usepackage{xcolor}         
\usepackage{amsmath,amsthm}
\usepackage{graphicx}

\usepackage{multirow}
\usepackage{makecell}

\newtheorem{theorem}{Theorem}

\newtheorem{assumption}{Assumption}

\title{Approximating full conformal prediction: distribution free guarantees via the tournament correction}

\author{
  Aabesh Bhattacharyya \\
  Department of Statistics\\
  University of Chicago\\
  Chicago, IL 60637 \\
  \texttt{aabesh@uchicago.edu} \\
  \And
  Boxuan Zhang\\
  Department of Statistics\\
  University of Chicago\\
  Chicago, IL 60637\\
  \texttt{boxuanzhang@uchicago.edu} \\
  \AND
  Rina Foygel Barber \\
  Department of Statistics\\
  University of Chicago\\
  Chicago, IL 60637\\
  \texttt{rina@uchicago.edu} \\
}

\begin{document}

\maketitle

\begin{abstract}

Conformal prediction is a framework for providing prediction intervals with distribution-free validity, guaranteeing predictive coverage for data drawn from any distribution. Its two main variants are full conformal prediction and split conformal prediction (also called transductive and inductive).  Full conformal prediction is widely considered to be statistically more efficient (since split conformal prediction requires data splitting, and therefore can lead to wider prediction intervals due to the resulting loss in sample size), but  its implementation is computationally prohibitive, as it requires the underlying model to be refit for every candidate value in the response space. Existing computational shortcuts, such as using a discrete grid of values to approximate the full conformal prediction construction, frequently lack theoretical guarantees on marginal coverage and can fail in practice. 

To address this limitation, we introduce a novel class of approximations to the full conformal prediction method, based on the idea of \emph{tournaments}, which enables the construction of prediction sets with a rigorous marginal coverage guarantee of $1-2\alpha$. Under stability conditions, the theoretical coverage guarantee tightens to approximately $1-\alpha$. This new framework generalizes the existing method of leave-one-out cross-conformal prediction, while allowing for flexible use of various existing approximation strategies.
 
\end{abstract}

\section{Introduction}\label{sec:introduction}

Conformal prediction is a general and flexible framework for uncertainty quantification that converts predictions from machine learning or other black-box models into prediction sets with finite-sample validity. Its model-agnostic nature and minimal assumptions have made it useful in a wide range of applications, including medicine \citep{vazquez2022conformal,lu2022fair} and finance \citep{wisniewski2020application}. As increasingly complex prediction algorithms are being deployed in practice, such distribution-free tools for assessing uncertainty have become especially important.

Formally, let $Z_i=(X_i,Y_i) \in \mathcal X \times \mathcal Y$ for $i=1,\dots,n+1$, and suppose that $Z_1,\dots,Z_n,Z_{n+1}$ are exchangeable, with $Y_{n+1}$ unobserved at test time. Given a target miscoverage level $\alpha \in (0,1)$, the goal is to construct a prediction set $\widehat C_{n,\alpha}(X_{n+1})$ such that
\begin{equation}\label{eqn:marginal}
\mathbb P\bigl(Y_{n+1} \in \widehat C_{n,\alpha}(X_{n+1})\bigr) \ge 1-\alpha.
\end{equation}

In \emph{full conformal prediction} \citep{vovk2005algorithmic}, for each candidate label $y \in \mathcal Y$, one forms the augmented dataset $\mathcal D_{n+1}^y = \{Z_1,\dots,Z_n,Z_{n+1}^y\}$ where $Z_{n+1}^y = (X_{n+1},y)$ and the prediction set is defined as
\begin{align}\label{eq: full_cp}
    \widehat{C}_{n,\alpha}^{\mathrm{full}}(X_{n+1}) =
\left\{
y \in \mathcal Y :
s(Z_{n+1}^y;\mathcal{D}_{n+1}^y) \leq Q_{\left(1-\alpha\right)\left(1+\frac{1}{n}\right)}\left(\left\{s(Z_i,\mathcal{D}_{n+1}^y)\right\}_{i=1}^n\right)
\right\}.
\end{align}
Here $s(z;\mathcal{D})$ is a \emph{score function}, with large values indicating that $z$ appears unusual (does not `conform') to the trends observed in $\mathcal{D}$, while $Q_\tau(\cdot)$ denotes the $\tau$-quantile of a list of values. The function $s(z;\mathcal D)$ is required to be \emph{symmetric}, meaning that it is invariant to reordering the data points in $\mathcal D$. For intuition, we can consider a canonical example: the \emph{residual score}, given by
\begin{equation}\label{eqn:resid_score}s(z;\mathcal{D}) = |y  - \hat{f}_{\mathcal{D}}(x)|,\end{equation}
where $\hat{f}_{\mathcal{D}}$ denotes a fitted model produced by a (symmetric) regression algorithm trained on the dataset $\mathcal{D}$. Full conformal prediction offers a distribution-free guarantee of predictive coverage: if the training and test data points $Z_1,\dots,Z_{n+1}$ are exchangeable, then for any symmetric score function $s$, the marginal coverage guarantee~\eqref{eqn:marginal} is guaranteed to hold \citep{vovk2005algorithmic}.

An important aspect of conformal prediction is the trade-off between statistical efficiency and computational cost. In particular,
full conformal suffers from a high computational cost: one must refit or reevaluate the underlying model on the augmented dataset $\mathcal D_{n+1}^y$ for every possible value $y$, which is often infeasible (e.g., for a real-valued response, when $\mathcal Y = \mathbb R$).
A popular alternative is \emph{split conformal prediction} \citep{papadopoulos2008inductive}. Here the sample is divided into a training set $\mathcal D_{\mathrm{tr}}$ and a calibration set $\mathcal D_{\mathrm{cal}}$, each of size $n/2$. Then the prediction interval is defined as
\[
\widehat C_{n,\alpha}^{\mathrm{split}}(X_{n+1})
=
\left\{y \in \mathcal Y :
s((X_{n+1},y);\mathcal D_{\mathrm{tr}}) \leq Q_{\left(1-\alpha\right)\left(1+\frac{1}{n/2}\right)}\left(\left\{s\left(Z_i;\mathcal{D}_{\mathrm{tr}}\right)
\right\}_{i:Z_i \in \mathcal{D}_{\mathrm{cal}}}\right)\right\}.
\]
This procedure is computationally attractive since it requires only a single model fit, but it pays a statistical price for sample splitting---only half of the data is used to train the predictor, and the other half is used to calibrate the interval width. As a result, split conformal prediction often produces wider prediction sets than full conformal.

Motivated by this tension, many practical works consider approximations to full conformal that aim to retain its statistical benefits without its computational burden. Common examples include evaluating full conformal only on a grid of candidate values $y$, or using a naive plug-in score in which the model is fit once on the training data and then reused for all candidate responses. Although such methods can be effective in practice, they typically do not come with general finite-sample coverage guarantees and can fail badly in certain situations.

\paragraph{Our contribution.}
We introduce a framework for modifying approximations to the full conformal prediction set, using a \emph{tournament correction}, that gives rise to a broad family of conformal prediction methods that are implementable in practice and enjoy a finite-sample coverage guarantee of $1-2\alpha$.  In addition, under suitable stability assumptions, we show that a small inflation of the resulting interval can further improve the guarantee, yielding coverage close to $1-\alpha$. This framework includes the existing leave-one-out cross-conformal procedure as a special case.

\section{Preliminaries: approximations to full conformal}\label{sec:background}
In this section, we summarize several common approaches towards approximating full conformal prediction, in the setting where $\mathcal{Y}$ is large (e.g., a real-valued response, $\mathcal Y = \mathbb R$) so that computing $\widehat{C}_{n,\alpha}^{\mathrm{full}}(X_{n+1})$ is infeasible.

\paragraph{Example 0: Deletion.}
We first consider a baseline method, which neglects to use conformal prediction entirely: we train a model on the training dataset $\mathcal{D}_n$ only, and use the training residuals or scores to form a prediction set. To compare to full conformal~\eqref{eq: full_cp}, we can think of this approach as taking an approximation via deletion: we simply delete (or rather, fail to include) the test point into the model training process.

That is, instead of computing the score function using the augmented dataset $\mathcal{D}^y_{n+1}$, we use only the training points $\mathcal D_n$, leading to the prediction set
    \[\widehat{C}_{n,\alpha}^{\mathrm{delete}}(X_{n+1}) = \left\{y\in\mathcal Y : s(Z^y_{n+1};\mathcal D_n) \leq Q_{(1-\alpha)(1+\frac{1}{n})}(\{s(Z_i;\mathcal D_n)\}_{i=1}^n)\right\}.\]

Of course, this above approach will often lead to severe undercoverage, since we are ignoring the issue of overfitting. For example, if we use the residual score~\eqref{eqn:resid_score}, we obtain the prediction set
\[\widehat{C}_{n,\alpha}^{\mathrm{delete}}(X_{n+1}) = \hat{f}_{\mathcal D_n}(X_{n+1}) \pm   Q_{(1-\alpha)(1+\frac{1}{n})}(\{|Y_i - \hat{f}_{\mathcal D_n}(X_i)|\}_{i=1}^n),\]
which is likely much too narrow since the training residuals $|Y_i - \hat{f}_{\mathcal D_n}(X_i)|$ are typically smaller than the test point's residual $|Y_{n+1} - \hat{f}_{\mathcal D_n}(X_{n+1})|$.
Nonetheless, we can think of this as a (typically very inaccurate) `approximation' to full conformal prediction.

\paragraph{Example 1: Rounding.} In practice, the most commonly used approximation for full conformal prediction is to simply use rounding: that is, we consider a discrete grid of $y$ values and only fit the model at those values \citep{lei2018distribution}.

Concretely, instead of computing the score function using the augmented dataset $\mathcal{D}^y_{n+1}$, we replace $y$ with its rounded value. Writing $[y]$ to denote the operation of rounding $y$ to a finite grid of values $y^{(1)},\dots,y^{(M)}\in\mathcal Y$, we then have the prediction set
    \[\widehat{C}_{n,\alpha}^{\mathrm{round}}(X_{n+1}) = \left\{y\in\mathcal Y : s(Z^y_{n+1};\mathcal D^{[y]}_{n+1}) \leq Q_{(1-\alpha)(1+\frac{1}{n})}(\{s(Z_i;\mathcal D^{[y]}_{n+1})\}_{i=1}^n)\right\}.\]

Note that, in this approach, we only need to fit a finite number of models: for instance, for the residual score~\eqref{eqn:resid_score}, we only need to train $M$ models,
$\hat{f}_m = \hat{f}_{\mathcal{D}_{n+1}^{y^{(m)}}}$ for each $ m=1,\dots,M$,
and can then compute the prediction set as
\begin{multline*}\widehat{C}_{n,\alpha}^{\mathrm{round}}(X_{n+1}) = \bigcup_{m=1,\dots,M}\bigg\{y\in\mathcal Y : [y] = y^{(m)}, \\ |y - \hat{f}_m(X_{n+1})| \leq Q_{(1-\alpha)(1+\frac{1}{n})}(\{|Y_i - \hat{f}_m(X_i)|\}_{i=1}^n)\bigg\}.\end{multline*}

\paragraph{Example 2: One-step update.} In settings where the fitted model is obtained by solving an optimization problem, another common approach is the one-step update: for example, if $\hat{f}_{\mathcal D}$ is obtained by optimizing some penalized likelihood method, we might train the model on $\mathcal D_n$, and then add in the hypothesized test point $(X_{n+1},y)$ by taking one step of gradient descent. 
More generally, we approximate the full conformal set~\eqref{eq: full_cp} by adding in the test point approximately through a one-step update \citep{martinez2023approximating,tailor2025approximating}: we write $\mathrm{Update}(\hat{f}_{\mathcal D},z)$ to denote that a trained model $\hat{f}_{\mathcal D}$ is then updated approximately with a new data point $z$. 

For example, if we assume a parametric predictor $f_\theta:\mathcal{X}\to\mathbb{R}$ and
$\widehat f_{\mathcal{D}} = f_{\widehat\theta(\mathcal{D})}$ where
\[
\widehat\theta(\mathcal{D}) := \arg\min_{\theta} \; L_{\mathcal{D}}(\theta).
\]
Then $\mathrm{Update}(\hat f_{\mathcal{D}},z) = f_{\widetilde \theta}$ where $\widetilde\theta$ is obtained by taking one gradient descent step:
\[
\tilde\theta
:=
\hat\theta(\mathcal{D}) - \eta \,\nabla L_{\mathcal{D} \cup \{z\}}(\hat\theta(\mathcal{D})),
\]
assuming that the loss is differentiable (here $\eta>0$ is some step size parameter). Let $\ell(z,\hat{f})$ denote any loss that compares a data point $z$ against a fitted model $\hat{f}$, e.g., we might choose the absolute residual score $\ell(z,\hat{f})=|y-\hat{f}(x)|$ where $z=(z,y)$. We then have the prediction set
    \begin{multline*}\widehat{C}_{n,\alpha}^{\mathrm{one-step}}(X_{n+1}) = \bigg\{y\in\mathcal Y : \ell\left(Z_{n+1}^y, \mathrm{Update}(\hat{f}_{\mathcal D_n},Z_{n+1}^y)\right) \leq \\Q_{(1-\alpha)(1+\frac{1}{n})}\left(\left\{\ell\left(Z_i, \mathrm{Update}(\hat{f}_{\mathcal D_n},Z_{n+1}^y)\right)\right\}_{i=1}^n\right)\bigg\}.\end{multline*}

\paragraph{Example 3: Approximating the Bayesian PPD.} Finally, we consider a Bayesian setting, where our aim is to use the posterior predictive distribution (PPD) for the score. 
Given a prior $\pi$ on parameter $\theta$, and a likelihood $f_\theta(y\mid x)$ for the conditional distribution of $Y\mid X$, the PPD is given by
\[\mathbb{E}_{\theta\sim \pi(\cdot\mid\mathcal{D})}[f_\theta(y\mid x)],\]
i.e., the expected likelihood when $\theta$ is sampled from the posterior, given some dataset $\mathcal{D}$. Since we would like to retain values of $Y$ that are likely under the PPD, we would ideally like to use the score $s((x,y);\mathcal D) = -\mathbb{E}_{\theta\sim \pi(\cdot\mid\mathcal{D})}[f_\theta(y\mid x)]$. In practice, the PPD is generally estimated with a Monte Carlo approximation, and
\[s((x,y);\mathcal{D}) = -\frac{1}{K}\sum_{k=1}^K f_{\theta_k}(y\mid x)\textnormal{ where }\theta_1,\dots,\theta_K\sim \pi(\cdot\mid\mathcal{D}).\]
However, it is impractical to compute scores $s(\cdot;\mathcal{D}_{n+1}^y)$ for all values of $y$ (since we cannot draw samples from each possible posterior distribution). This motivates an `add-one-in' (AOI) approximation to the PPD, using importance weighting \citep{fong2021conformal,deliu2025interplay}: after sampling $\theta_k\sim \pi(\cdot\mid\mathcal{D}_n)$, the appropriate importance weight for adding $(X_{n+1},y)$ into the training set is proportional to $f_{\theta_k}(y\mid X_{n+1})$, and we therefore
have the prediction set
    \begin{multline*}\widehat{C}_{n,\alpha}^{\mathrm{AOI-PPD}}(X_{n+1}) = \bigg\{y\in\mathcal Y : -\sum_{k=1}^K f_{\theta_k}(y\mid X_{n+1}) \cdot f_{\theta_k}(y\mid X_{n+1}) \leq \\Q_{(1-\alpha)(1+\frac{1}{n})}\left(\left\{-\sum_{k=1}^Kf_{\theta_k}(y\mid X_{n+1}) \cdot f_{\theta_k}(Y_i\mid X_i)\right\}_{i=1}^n\right)\bigg\},\end{multline*}
    where $\theta_1,\dots,\theta_K\sim\pi(\cdot\mid \mathcal{D}_n)$.

\subsection{Unifying notation} 
To summarize these examples, we introduce a common notation that encompasses all of these various approximations to full conformal prediction. Write
\[s_{\mathrm{approx}}(z;\mathcal D + \{z'\})\]
which computes a score for data point $z$, as compared to a dataset $\mathcal D$ and an additional data point $z'$. The idea is that this is an approximation to the score computed on a dataset that includes $z'$,
\[s_{\mathrm{approx}}(z;\mathcal D + \{z'\}) \approx s(z;\mathcal{D}\cup \{z'\}).\]
However, $s_{\mathrm{approx}}(z;\mathcal D + \{z'\}$ is required to be symmetric in $\mathcal D$ but \emph{not} in the combined dataset $\mathcal D\cup\{z'\}$.
We then define a prediction set
\begin{equation}\label{eq:approximate_conformal_sets}\begin{split}
    \widehat{C}_{n,\alpha}^{\mathrm{approx}}(X_{n+1}) = 
\Big\{
y \in \mathcal Y :&
s_{\mathrm{approx}}(Z_{n+1}^y;\mathcal{D}_n + \{Z_{n+1}^y\}) \leq \\&Q_{\left(1-\alpha\right)\left(1+\frac{1}{n}\right)}\left(\left\{s_{\mathrm{approx}}(Z_i;\mathcal{D}_n + \{Z_{n+1}^y\})\right\}_{i=1}^n\right)
\Big\}.\end{split}
\end{equation}
Each of the four examples described above can be expressed as an instance of this unified notation, by choosing the score $s_{\mathrm{approx}}$ as the following:
\begin{itemize}
    \item \textbf{Example 0: Deletion.} We define $s_{\mathrm{approx}}(z;\mathcal D + \{z'\}) = s(z;\mathcal D)$, i.e., the data point $z'$ is not included when computing the score. 
    \item \textbf{Example 1: Rounding.} Let $z'=(x',y')$, and let $[y']$ denote the value of $y'$ after rounding to the grid. Then $s_{\mathrm{approx}}(z;\mathcal D + \{z'\}) = s(z;\mathcal D \cup\{(x',[y'])\})$, i.e., the response $y'$ is rounded to $[y']$ when computing the score.
    \item \textbf{Example 2: One-step update.} Writing $\mathrm{Update}(\cdot)$ to denote a one-step model update rule (e.g., gradient descent) as before, we have
    $s_{\textnormal{approx}}(z;\mathcal{D}+\{z'\}) = \ell\big(z,\mathrm{Update}(\hat{f}_{\mathcal D},z')\big)$.
    \item \textbf{Example 3: Approximating the Bayesian PPD.}\footnote{We remark that this example uses a \emph{randomized} score, since the score function depends on random samples drawn from the posterior. Conformal prediction theory extends naturally to the setting of randomized scores \citep{vovk2005algorithmic}. The results of our work hold for this setting as well, as we will show in the Appendix.} Let $z=(x,y)$ and $z'=(x',y')$. We then define $s_{\mathrm{approx}}(z;\mathcal{D}+\{z'\}) = -\sum_{k=1}^K f_{\theta_k}(y\mid x)\cdot f_{\theta_k}(y'\mid x')$, where $\theta_1,\dots,\theta_K\sim\pi(\cdot\mid \mathcal{D})$.
\end{itemize}

\subsection{Other related works}
Beyond the four classes of methods considered here, prior work on approximating full conformal prediction has mainly taken two forms: exploiting model-specific structure to compute full conformal sets more efficiently, and proving validity of cheaper approximations under algorithmic stability. For example, \citet{lei2019fast}, \citet{ndiaye2019computing}, \citet{ndiaye2023root}, and \citet{cherubin2021exact} develop efficient exact or approximate full conformal algorithms in structured settings such as penalized regression, nearest-neighbor, and kernel methods. Relatedly, cross-conformal prediction \citep{vovk2015cross} and the jackknife+/CV+ methods of \citet{barber2021predictive} are methods closely related to ours; as we will see below, leave-one-out cross conformal is a special case of our framework. More recently, \citet{ndiaye2022stable} and \citet{lee2025leave} study computationally efficient conformal procedures under stability assumptions. These works are complementary to ours: they recover validity through stability assumptions, whereas our tournament based corrected methods achieve finite-sample validity without requiring any assumptions, with stability used only later to sharpen the guarantee.

\section{Tournament correction}\label{sec:tourn_correction}

In this section we formally introduce our method. We mainly discuss how a slight modification to the score function $s_{\mathrm{approx}}$ can be used to develop methods that automatically provide $1-2\alpha$ coverage guarantees. At a high level, the idea is that we will now treat \emph{two} data points approximately in the training process---the test point $n+1$ along with one training point $i$, rather than treating only the test point approximately as in the examples above. Extending our earlier notation, we consider an approximate score of the form

\begin{align}\label{eq:tournament_score}
    s_{\mathrm{approx}}(z;\mathcal{D} + \{z',z''\})
\end{align}
which computes the score at data point $z$ when the dataset $\mathcal{D}$ and the data points $z',z''$ are used to train the model possibly in asymmetric ways. As before, we require $s_{\mathrm{approx}}(z;\mathcal{D} + \{z',z''\})$ to be symmetric in $\mathcal{D}$, and now also require $s_{\mathrm{approx}}(z;\mathcal{D} + \{z',z''\}) = s_{\mathrm{approx}}(z;\mathcal{D}+\{z'',z'\})$. 

Using this score we can define a prediction set
\begin{equation}\begin{split}
     \widehat C_{n,\alpha}^{\mathrm{tourn}}(X_{n+1}) := \Bigg\{y:\sum_{i=1}^n\mathbf{1}\Big\{& s_{\mathrm{approx}}(Z_{n+1}^y;\mathcal{D}_{n\setminus i} + \{Z_i,Z_{n+1}^y\})  \\ 
     & > s_{\mathrm{approx}}(Z_i;\mathcal{D}_{n\setminus i} + \{Z_i,Z_{n+1}^y\})\Big\} <(1-\alpha)(n+1)\Bigg\},
\end{split}\label{eq:tournament_corrected_methods}\end{equation}
where $\mathcal{D}_{n\setminus i } = \{Z_1,\dots,Z_{i-1},Z_{i+1},\dots,Z_{n}\}$, i.e., all the training data points with the $i$-th point removed. As compared to an approximation to full conformal (any approximation of the form given in~\eqref{eq:approximate_conformal_sets}), this general framework provides a modification to this approximation, which we refer to as the \emph{tournament correction} (we will explain the idea behind this name, below). 

\subsection{Examples}\label{sec:tournament_corrected_examples}

Next, we describe how each of the approximate methods stated in Section~\ref{sec:background} can be modified using this framework. 
\begin{itemize}
    \item \textbf{Example 0: Deletion.} $s_{\mathrm{approx}}(z;\mathcal{D}+\{z',z''\}) = s(z;\mathcal{D})$ i.e., we remove both the data points $z'$ and $z''$. The key difference from the approximate deletion method is that in the approximate method, to find the score at the training and the test point only the test point was removed and this introduced asymmetry in the training and the test scores, whereas, now when we compare the scores at $Z_{n+1}^y$ and $Z_i$ we remove both the $i$-th point and the test point. This method is exactly equivalent to the leave-one-out cross-conformal method proposed by \citet{vovk2015cross,vovk2018cross}.
    \item \textbf{Example 1: Rounding.} If $z' = (x',y')$ and $z'' = (x'',y'')$, define $s_{\mathrm{approx}}(z;\mathcal{D}+\{z',z''\}) = s(z;\mathcal{D} \cup \{(x',[y']) \cup (x'',[y''])\})$. Again in this case, the difference from the approximate method is that in the tournament-corrected method~\eqref{eq:tournament_corrected_methods}, both the training point $Z_i$ and the hypothesized test point $Z^y_{n+1}$ are rounded, when computing the scores.
    \item \textbf{Example 2: One-step update.} As before, writing $\mathrm{Update}(\cdot)$ to denote a one-step model update rule (e.g., gradient descent), we have
    $s_{\textnormal{approx}}(z;\mathcal{D}+\{z',z''\}) = \ell\big(z,\mathrm{Update}(\hat{f}_{\mathcal D},\{z',z''\})\big)$. As compared to the approximate method, in the tournament-corrected method~\eqref{eq:tournament_corrected_methods} both the training point $Z_i$ and the hypothesized test point $Z_{n+1}^y$ are added into the trained model via a one-step update.
    
    \item \textbf{Example 3: Bayesian posterior predictive density.} Let $z=(x,y)$, $z'=(x',y')$, and $z''=(x'',y'')$. We then define $s_{\mathrm{approx}}(z;\mathcal{D}+\{z',z''\}) = -\sum_{k=1}^K f_{\theta_k}(y\mid x)\cdot f_{\theta_k}(y'\mid x')f_{\theta_k}(y''\mid x'')$, where $\theta_1,\dots,\theta_K\sim\pi(\cdot\mid \mathcal{D})$. Thus, in the tournament-corrected method~\eqref{eq:tournament_corrected_methods}, both the training point $Z_i$ and the hypothesized test point $Z^y_{n+1}$ are added into the posterior with the add-one-in weight-based approximation.\footnote{See Appendix~\ref{app:bayesian_computation_shortcut} for a more formal view of the validity of this randomized score function.}
\end{itemize}

\subsection{Theoretical guarantee}
In practice, we expect that the tournament-corrected method~\eqref{eq:tournament_corrected_methods} will result in approximately $1-\alpha$ coverage, since it provides an approximation to the full conformal prediction set. Theoretically, however, we would need assumptions to ensure that the approximation is accurate. Nonetheless, the tournament correction ensures that, in the worst case, the miscoverage can increase by at most a factor of $2$.
\begin{theorem}[Validity]\label{thm:validity}
    Let $\mathcal{D}_{n+1} = \{Z_1,\dots,Z_{n+1}\}$ be an exchangeable dataset where $Z_i = (X_i,Y_i)$. Let  $s_{\mathrm{approx}}(z;\mathcal D +\{z',z''\})$ be any score that is symmetric in $\mathcal D$ and symmetric in $\{z',z''\}$. Then the conformal prediction set defined in~\eqref{eq:tournament_corrected_methods} 
satisfies the marginal coverage guarantee 
\[\mathbb{P}(Y_{n+1} \in \widehat C^{\textnormal{tourn}}_{n,\alpha}(X_{n+1})) \geq 1 - 2\alpha.\]
\end{theorem}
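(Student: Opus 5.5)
Our plan is to use the tournament structure together with the standard counting argument behind jackknife+ and cross-conformal guarantees \citep{barber2021predictive}. For each ordered pair $i\neq j$ in $\{1,\dots,n+1\}$, define the comparison indicator
\[
A_{ij} = \mathbf{1}\Big\{ s_{\mathrm{approx}}(Z_i;\mathcal{D}_{n+1\setminus\{i,j\}} + \{Z_i,Z_j\}) > s_{\mathrm{approx}}(Z_j;\mathcal{D}_{n+1\setminus\{i,j\}} + \{Z_i,Z_j\})\Big\},
\]
which records that point $i$ ``beats'' point $j$ in the match between them. Both scores are computed from the same model, namely the one built from $\mathcal{D}_{n+1\setminus\{i,j\}}$ together with the unordered pair $\{Z_i,Z_j\}$. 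This uses the assumed symmetry of $s_{\mathrm{approx}}$ in $\{z',z''\}$. Hence $A_{ij}+A_{ji}\le 1$: two points cannot each have a strictly larger score than the other. Setting $i=n+1$ and $Y=Y_{n+1}$, miscoverage $Y_{n+1}\notin\widehat C^{\mathrm{tourn}}_{n,\alpha}(X_{n+1})$ is exactly the event $\sum_{j\neq n+1} A_{n+1,j}\ge (1-\alpha)(n+1)$, i.e., the test point wins at least $(1-\alpha)(n+1)$ of its matches.

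The first step is a deterministic counting bound. Call index $i$ a \emph{strange} point if $\sum_{j\ne i} A_{ij}\ge(1-\alpha)(n+1)$. Let $S$ be the set of strange points and $k=|S|$. Count the wins of members of $S$. Each $i\in S$ has at least $(1-\alpha)(n+1)$ wins. On the other hand, the matches among members of $S$ contribute at most $\binom{k}{2}$ wins in total, since each unordered pair produces at most one win. The matches between members of $S$ and points outside $S$ contribute at most $k(n+1-k)$ wins. Therefore
\[
k(1-\alpha)(n+1)\le \tbinom{k}{2}+k(n+1-k),
\]
so $(1-\alpha)(n+1)\le (k-1)/2 + n+1-k$. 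This gives $k\le 2\alpha(n+1)-1 < 2\alpha(n+1)$ whenever $k\ge1$. Hence at most $2\alpha(n+1)$ of the $n+1$ points are strange.

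The second step is exchangeability. The array $(A_{ij})$ is a fixed function of $(Z_1,\dots,Z_{n+1})$. Permuting the data permutes the indices of this array, because $s_{\mathrm{approx}}$ is symmetric in $\mathcal{D}$ and in $\{z',z''\}$. Consequently the events $\{i\in S\}$ all have the same probability. We then obtain
\[
\mathbb{P}(n+1\in S)=\frac{1}{n+1}\mathbb{E}[|S|]\le 2\alpha,
\]
which gives the claim. For randomized scores such as Example 3, we would condition on or jointly exchange the auxiliary randomness, drawing it independently for each unordered pair.

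The main obstacle is making the equivariance of $A_{ij}$ under permutations fully rigorous. In particular we must check that the model used in the match between $i$ and $j$ depends only on the multiset $\mathcal{D}_{n+1\setminus\{i,j\}}$ and the unordered pair $\{Z_i,Z_j\}$. This is exactly what the two symmetry hypotheses guarantee. We also need the counting inequality to remain valid in edge cases, such as ties (which count as a win for neither point) and $k=0$.
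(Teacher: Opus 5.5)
Your proof is correct and follows essentially the same route as the paper. It uses the same pairwise tournament matrix $A_{ij}$ built from $s_{\mathrm{approx}}(\cdot\,;\mathcal{D}_{[n+1]\setminus\{i,j\}}+\{Z_i,Z_j\})$, the same deterministic bound that at most $2\alpha(n+1)$ indices can win at least $(1-\alpha)(n+1)$ games, and the same permutation-equivariance-plus-exchangeability step to transfer that bound to the test point. The only differences are that you prove the counting bound explicitly, via $k(1-\alpha)(n+1)\le k(k-1)/2+k(n+1-k)$, where the paper cites it from \citet{barber2021predictive}, and that the paper phrases the exchangeability step as $A\stackrel{\textnormal{d}}{=}\Pi A\Pi^\top$.
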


For the special case of leave-one-out cross-conformal (which is equivalent to our tournament-corrected method, in the setting of Example 0, as described above), this coverage result is established by \citet{barber2021predictive}. 

The broader result of Theorem~\ref{thm:validity} is proved via a generalization of the tournament matrix style argument used in \citet{barber2021predictive} (the complete proof is provided in Appendix~\ref{app:proof_validity}). The tournament matrix arises in the following way: for each training point $i\in[n]$, the test point $n+1$ plays a `game' against this training point, where both $Z_i$ and $Z_{n+1}^y$ are removed from the training set (i.e., we compute scores of the form $s_{\mathrm{approx}}(\,\cdot\,;\mathcal{D}_{n\setminus i} + \{Z_i,Z_{n+1}^y\})$), and then the winner of the `game' is determined by which point, $Z_i$ or $Z_{n+1}^y$, has the higher score. The prediction set $\widehat{C}^{\textnormal{tourn}}_{n,\alpha}(X_{n+1})$ is then constructed by examining how many `games' the test point wins, within this tournament, which is why we refer to this methodology as a \emph{tournament correction}.

\paragraph{A tradeoff: validity versus computation.} The result of Theorem~\ref{thm:validity} shows that the tournament correction allows us to restore theoretical validity when computing an approximation to the full conformal prediction set. However, we remark that this comes at a computational cost, since the tournament correction requires training models with each data point $i\in[n]$ deleted in turn. In some cases it may be possible to avoid additional cost through computational shortcuts (see Appendix~\ref{app:bayesian_computation_shortcut} for one such shortcut in the setting of the Bayesian PPD, via using rejection sampling). In general, we can view this as a tradeoff between statistical validity and computational cost.

\section{Experiments: coverage and length}\label{sec:simulations_coverage}

In this section, we study the empirical behavior of the approximate conformal methods from Section~\ref{sec:background} and their tournament-corrected counterparts from Section~\ref{sec:tourn_correction}. Our goals are twofold:
\begin{itemize}
    \item In practice, existing approximations often work well, providing (nearly) $1-\alpha$ coverage empirically. In such scenarios, we aim to show that the tournament correction leads to minimal changes in the existing methods: the resulting prediction sets are nearly the same.
    \item On the other hand, the tournament correction ensures that coverage can never decrease below $1-2\alpha$, even in a worst-case scenario; in contrast, informal approximations to full conformal prediction do not offer any comparable distribution-free guarantees. Therefore, we aim to show that, in certain challenging settings, existing approximate methods can substantially lose coverage, while the corresponding tournament-corrected methods remain reliable.
\end{itemize}
Taken together, these experiments show that the proposed correction is broadly useful: it does not lead to overly conservative results in scenarios where existing approximations are already accurate, but in regimes where existing approximations break down, the corrected method continues to provide valid uncertainty quantification.\footnote{Code for reproducing all experiment results and tables in the paper is available at \url{https://github.com/aabeshb/Approximating_full_conformal}.}

\subsection{Simulations}\label{sec:simulations}
We first compare each approximation of full conformal (Examples 0, 1, 2, and 3) to its tournament-corrected version in a simulated setting, where data is generated from a Gaussian linear model: the training and test data points are drawn as
\[
X_i \overset{\mathrm{i.i.d.}}{\sim} N(0,I_p), 
\qquad 
Y_i = X_i^\top \beta^\star + \varepsilon_i,
\qquad 
\varepsilon_i \overset{\mathrm{i.i.d.}}{\sim} N(0,1).
\]
The coefficient vector $\beta^\star$ is drawn uniformly from the sphere satisfying $\|\beta^\star\|_2=\sqrt{10}$. Details on the implementation of each of the  methods is provided in Section~\ref{app:regression_simulation_details}.

For each method, we report empirical coverage and average prediction-set length, over $100$ independent trials.

\begin{figure}[ht!]\label{fig:corrected_vs_uncorrected}
    \centering
    \includegraphics[width=\textwidth]{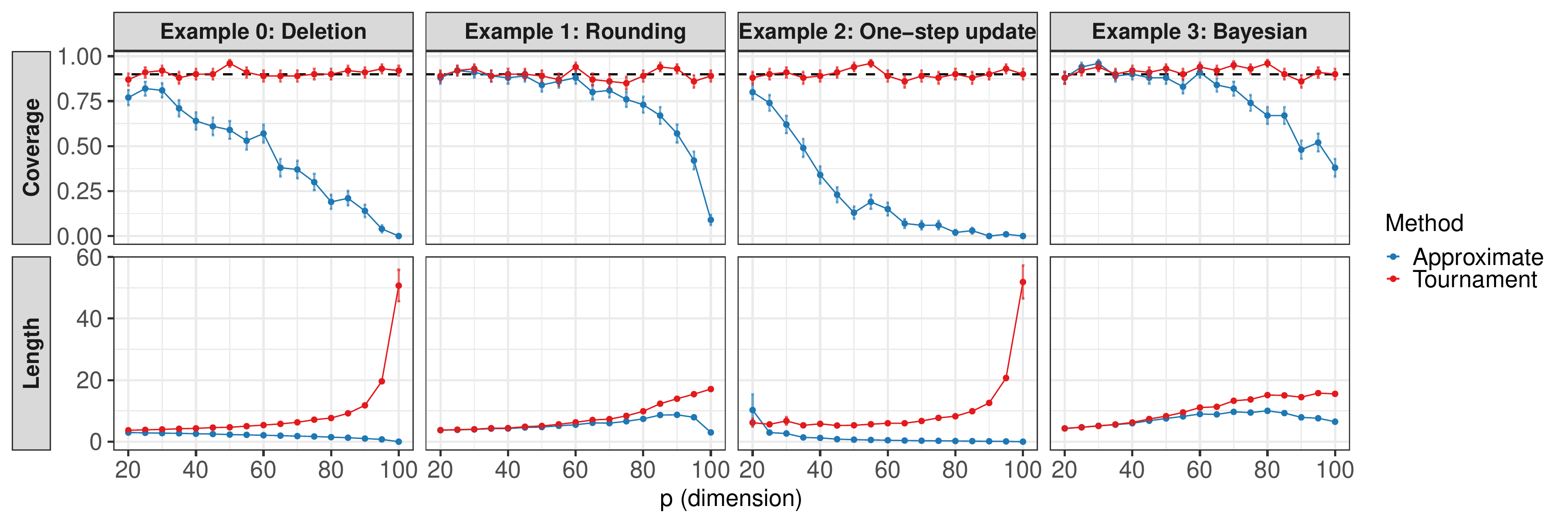}
    \caption{
    Empirical coverage and average prediction-set length for approximate and tournament-corrected conformal methods for $n=100$, with dimension varying over $p=20,25,\dots,100$. A dashed line indicates the nominal coverage level $1-\alpha = 0.9$. Results are averaged over 100 independent trials, with standard error bars shown.}
    \label{fig:linear_simulations}
\end{figure}
Figure~\ref{fig:corrected_vs_uncorrected} show that for smaller values of $p$, both the approximate and the corrected methods have similar performance in terms of both length and coverage. As $p\to n$ and we move towards more unstable regimes, all the approximate methods perform very poorly with coverage falling down far below the target threshold. These show that as the underlying model becomes unstable, the uncorrected  approximations are no longer reliable, whereas all the tournament-corrected methods maintain a level of coverage close to $1-\alpha = 0.9$.

\subsection{Real-data examples}

As we have seen in the simulations, in settings where the existing approximate methods exhibit the desired coverage level empirically, the tournament-corrected method behaves nearly identically: that is, the tournament-corrected method does not result in an overly conservative prediction set in scenarios where no correction is needed. In this next experiment, we study the same question in the setting of real data. Concretely, we replicate two experiments where approximate full conformal methods have been used in practice, and observe that the tournament-corrected method returns nearly identical results in both cases. Both experiments use the Diabetes dataset \citep{efron2004least}. 

First, we consider the rounding approximation (Example 1). For implementing full conformal with rounding (i.e., the uncorrected approximation), we follow the implementation of \citet{lee2025leave} (which studies full conformal as a baseline comparison for a different proposed method), using the residual score~\eqref{eqn:resid_score} for a model trained via ridge-regularized Huber regression. 
Second, we consider the Bayesian PPD add-one-in approximation (Example 3). We replicate an experiment conducted by \citet{fong2021conformal}, using a Gaussian linear regression likelihood with a prior over the parameters. 

For both real data examples, we implement the approximate conformal prediction method exactly as in the respective papers (using the same choice of coverage level $1-\alpha$ as used in those papers), and then implement the tournament-corrected version of the same method. Results are shown in Table~\ref{table:real_data_comparison}. All implementation details are provided in Appendix~\ref{app:realdata_details}. We see that, in both cases, the uncorrected approximation shows approximately the desired level of coverage. Moreover, in both cases, the resulting coverage and prediction set length is nearly identical for the approximate method~\eqref{eq:approximate_conformal_sets} and the tournament-corrected method~\eqref{eq:tournament_corrected_methods}, confirming that in practice, the tournament correction typically is not overly conservative when existing approximations are already providing the desired coverage level empirically.

\begin{table}[ht!]
  \centering
  \caption{Comparison of the length and coverage of approximate and tournament-corrected versions on the \texttt{diabetes} data. Standard errors are shown in parentheses.}
  \label{table:real_data_comparison}

  \setlength{\tabcolsep}{8pt}
  \renewcommand{\arraystretch}{1.12}

  \begin{tabular}{@{} l c l c c @{}}
    \toprule
    \makecell[l]{Method\\[-1pt]\footnotesize Based on paper}
      & $1-\alpha$
      & Type
      & Coverage
      & Length \\
    \midrule

    \multirow{2}{*}{\makecell[l]{Rounding\\[-1pt]\footnotesize \citep{lee2025leave}}}
      & \multirow{2}{*}{0.9}
      & Approximate
      & 0.8879 (0.0037)
      & 2.8244 (0.0045) \\
    &
      & Tournament-corrected
      & 0.9008 (0.0036)
      & 2.8862 (0.0046) \\

    \addlinespace[2pt]

    \multirow{2}{*}{\makecell[l]{Bayesian PPD\\[-1pt]\footnotesize \citep{fong2021conformal}}}
      & \multirow{2}{*}{0.8}
      & Approximate
      & 0.7986 (0.0054)
      & 1.8552 (0.0085) \\
    &
      & Tournament-corrected
      & 0.7986 (0.0055)
      & 1.8552 (0.0084) \\

    \bottomrule
  \end{tabular}
\end{table}

\section{Coverage under stability}\label{sec:coverage_under_stability}
The guarantee of Theorem~\ref{thm:validity} ensures that coverage cannot be worse than $1-2\alpha$, but in practice we expect to see coverage $\approx 1-\alpha$, since the tournament-corrected method approximates the full conformal prediction set. In this section, we show that this can be proved with an additional assumption: if the score function resulting from fitted model is \emph{stable} enough in the sense that it is not too sensitive to small changes in the way the data is used to train the model, then a $2\epsilon$-inflated prediction set which we define below satisfies a stronger coverage guarantee.

\begin{assumption}[stability]\label{ass:score_stability}
There exist \(\epsilon\ge 0\) and \(\nu\in[0,1]\) such that for every $i\in [n]$,
\[
\mathbb P\left(
\left|
s_{\mathrm{approx}}\bigl(Z_{n+1};\mathcal D_{n\setminus i} + \{Z_i,Z_{n+1}\}\bigr)
-
s_{\mathrm{approx}}\bigl(Z_{n+1};\mathcal D_n + \{Z_{n+1}\}\bigr)
\right|
\le \epsilon
\right)
\ge 1-\nu.
\]
\end{assumption}
The probability above is taken with respect to the distribution of $Z_1,\dots,Z_{n+1}$. 
Essentially, the assumption requires that changing the role of a single training point $Z_i$ (i.e., whether it is included in the training data $\mathcal D_n$, or is instead added to $\mathcal D_{n\setminus i}$ via an approximation) is not likely to lead to a large change in the score
at \(Z_{n+1}\).
Assumptions of this flavor are common in learning theory \citep{bousquet2002stability,hardt2016train}, and have also been used recently to justify computationally efficient approximations to full conformal prediction
\citep{barber2021predictive,ndiaye2022stable,lee2025leave}.

Next, define an inflated version of the tournament-corrected set:
\begin{multline}\label{eq:inflated_interval}
\widehat C_{n,\alpha}^{\mathrm{tourn},2\epsilon}(X_{n+1})
:=
\Bigg\{
y:\;
\sum_{i=1}^n
\mathbf 1\big\{
s_{\mathrm{approx}}\bigl(Z_{n+1}^y;\mathcal D_{n\setminus i } + \{Z_{n+1}^y,Z_i\}\bigr) \\
>
s_{\mathrm{approx}}\bigl(Z_i;\mathcal D_{n\setminus i } + \{Z_i,Z_{n+1}^y\}\bigr)+2\epsilon
\big\}
<
(1-\alpha)(n+1)
\Bigg\}.
\end{multline}

\begin{theorem}[Coverage under stability]\label{thm:validity_under_stability}

Under the setting and notation of Theorem~\ref{thm:validity}, if also Assumption~\ref{ass:score_stability} holds, then the $2\epsilon$-inflated set as defined in~\eqref{eq:inflated_interval} satisfies the coverage guarantee
\[
\mathbb P\left(
Y_{n+1}\in \widehat C_{n,\alpha}^{\mathrm{tourn},2\epsilon}(X_{n+1})
\right)
\ge 1-\alpha-3\sqrt{\nu}.
\]
\end{theorem}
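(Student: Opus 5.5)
The plan is to compare the tournament scores to a family of exchangeable "reference" scores, in the spirit of the stability analysis of the jackknife+ in \citet{barber2021predictive}. For each $j\in[n+1]$, define
\[
R_j := s_{\mathrm{approx}}\bigl(Z_j;\mathcal D_{n+1}\setminus\{Z_j\} + \{Z_j\}\bigr).
\]
This is the score of $Z_j$ when all other points are used symmetrically and $Z_j$ alone is added approximately. In particular $R_{n+1} = s_{\mathrm{approx}}(Z_{n+1};\mathcal D_n+\{Z_{n+1}\})$. Since $s_{\mathrm{approx}}$ is symmetric in its first dataset argument and the data are exchangeable, the vector $(R_1,\dots,R_{n+1})$ is exchangeable.

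Next I record the tournament scores at the true label $y=Y_{n+1}$:
\[
A_i := s_{\mathrm{approx}}(Z_{n+1};\mathcal D_{n\setminus i}+\{Z_i,Z_{n+1}\}),\qquad B_i := s_{\mathrm{approx}}(Z_i;\mathcal D_{n\setminus i}+\{Z_i,Z_{n+1}\}).
\]
Miscoverage means $\sum_i \mathbf 1\{A_i > B_i+2\epsilon\}\ge (1-\alpha)(n+1)$. Assumption~\ref{ass:score_stability} gives $|A_i-R_{n+1}|\le\epsilon$ with probability at least $1-\nu$. I also need $|B_i - R_i|\le \epsilon$ with probability at least $1-\nu$. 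To get it, swap the labels $i\leftrightarrow n+1$. This leaves the joint law unchanged by exchangeability. By the symmetry of $s_{\mathrm{approx}}$ in $\{z',z''\}$, the pair $(B_i,R_i)$ maps to exactly the pair $(A_i,R_{n+1})$ controlled by the assumption. Call index $i$ \emph{bad} if either of these two events fails, and let $N_{\mathrm{bad}}$ be the number of bad indices. Then $\mathbb E[N_{\mathrm{bad}}]\le 2\nu n$.

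For a good index $i$, the inequality $A_i>B_i+2\epsilon$ forces $R_{n+1}\ge A_i-\epsilon > B_i+\epsilon\ge R_i$. Hence on the miscoverage event,
\[
\#\{i\in[n]: R_{n+1}>R_i\}\ \ge\ (1-\alpha)(n+1) - N_{\mathrm{bad}}.
\]
Fix $\delta>0$. Either $N_{\mathrm{bad}}\ge \delta(n+1)$, which by Markov's inequality has probability at most $2\nu/\delta$. Or $\#\{i\le n: R_{n+1}>R_i\}\ge(1-\alpha-\delta)(n+1)$.

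The second event is handled by the standard deterministic counting fact. For any reals $R_1,\dots,R_{n+1}$, at most $(\alpha+\delta)(n+1)$ indices $j$ can strictly exceed at least $(1-\alpha-\delta)(n+1)$ of the other values. Combined with exchangeability, the probability that $j=n+1$ is such an index is at most $\alpha+\delta$. (This is the same ranking argument that underlies full conformal validity. With randomized scores, one conditions on the randomness, or folds it into the exchangeable data, as in the Appendix.) Therefore
\[
\mathbb P(\text{miscover})\le \alpha+\delta+2\nu/\delta.
\]
Choosing $\delta=\sqrt{2\nu}$ gives $\alpha+2\sqrt{2\nu}\le\alpha+3\sqrt\nu$.

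The main obstacle I expect is the exchangeability step transferring the stability assumption from $A_i$ to $B_i$. One must check carefully that the swap $i\leftrightarrow n+1$ sends $\mathcal D_{n\setminus i}$ to itself and sends $R_i$ to $R_{n+1}$. This requires the symmetry of $s_{\mathrm{approx}}$ both in $\mathcal D$ and in the pair $\{z',z''\}$.
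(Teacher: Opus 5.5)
Your proposal is correct and follows essentially the same argument as the paper. Both use the same exchangeable reference scores $R_j$, and both transfer Assumption~\ref{ass:score_stability} to the training-point score via the swap $i\leftrightarrow n+1$. Both then apply the same good-index comparison $A_i>B_i+2\epsilon \Rightarrow R_{n+1}>R_i$ and bound the number of bad indices with Markov's inequality.

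The only difference is the Markov threshold. The paper fixes it at $2\sqrt{\nu}(n+1)$, which gives exactly $\alpha+3\sqrt{\nu}$. Your optimized choice $\delta=\sqrt{2\nu}$ gives the slightly sharper bound $\alpha+2\sqrt{2\nu}$.
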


When \(\nu\) is small, the bound \(1-\alpha-3\sqrt{\nu}\) is therefore close to the nominal
\(1-\alpha\) level, providing a stronger coverage guarantee as compared to the \(1-2\alpha\) guarantee of
Theorem~\ref{thm:validity}. The proof of this theorem is given in Appendix~\ref{app:proof_validity_under_stability}.

\subsection{Stability comparison}

Theorem~\ref{thm:validity_under_stability} shows that the tournament-corrected methods admit a stronger coverage guarantee under a single stability condition, namely Assumption~\ref{ass:score_stability}. A natural question is whether the original approximate prediction sets in~\eqref{eq:approximate_conformal_sets} can also enjoy a comparable near-\((1-\alpha)\) guarantee: in other words, if we are willing to assume stability, is it still necessary to apply the tournament correction?

In fact, we will now see that the tournament correction continues to be useful. This is because the type of stability assumption needed for the (uncorrected) approximation~\eqref{eq:approximate_conformal_sets} to guarantee coverage, is much stronger than the type of stability required by Assumption~\ref{ass:score_stability}. Specifically, in order for $\widehat{C}_{n,\alpha}^{\mathrm{approx}}(X_{n+1})$ to lead to a similar coverage guarantee as Theorem~\ref{thm:validity_under_stability}, we would need to require
that
\begin{equation}\label{eq:stab_stronger}\begin{split}&
\mathbb P\left(
\left|
s_{\mathrm{approx}}(Z_{n+1};\mathcal D_n+\{Z_{n+1}\})
-
s(Z_{n+1};\mathcal D_{n+1})
\right|
\le \epsilon
\right)
\ge 1-\nu,\\
&
\mathbb P\left(
\left|
s_{\mathrm{approx}}(Z_i;\mathcal D_n+\{Z_{n+1}\})
-
s(Z_i;\mathcal D_{n+1})
\right|
\le \epsilon
\right)
\ge 1-\nu.
\end{split}\end{equation}
These conditions would lead to a coverage result for an inflated prediction set, analogous to the result of Theorem~\ref{thm:validity_under_stability} (see Appendix~\ref{app:validity_approx_methods} for details). Results of this type appear in the literature for various settings \citep{barber2021predictive,ndiaye2022stable,lee2025leave}.

However, the first condition in~\eqref{eq:stab_stronger} is substantially more restrictive than Assumption~\ref{ass:score_stability}: it requires that training only approximately on $Z_{n+1}$, will not lead to a large change in the score for $Z_{n+1}$, i.e., \emph{for the same data point that was removed}. For example, if we use rounding (so $Y_{n+1}$ is replaced by a value on the grid), an overparameterized method is likely to substantially change its prediction at $X_{n+1}$---and thus this type of stability condition would not hold. (See \citet{barber2021predictive} for further discussion of the difference in these stability conditions for the specific setting of Example 0, i.e., where the approximation is carried out via data point deletion).

To illustrate this difference empirically, for each of the four different methods, we estimate the stability properties of the score function. Specifically, for each method, we plot the values $(\nu,\epsilon)$ for which the conditions~\eqref{eq:stab_stronger} hold, in order to assess whether the uncorrected approximation would likely have approximate coverage; we also plot the values of $(\nu,\epsilon)$ for which Assumption~\ref{ass:score_stability} holds, in order to assess whether the tournament-corrected approximation would have $\approx 1-\alpha$ coverage as in Theorem~\ref{thm:validity_under_stability}. The data is drawn from the same distribution as in Section~\ref{sec:simulations} and the fitted model and all the implementational details are the same as Section~\ref{sec:simulations}. For this simulation, we consider $n=100,p=20$ which is a relatively stable regime and report the values over $200$ independent trials.

Even then across all four methods, the tournament-corrected methods exhibit stability curves that lie substantially below the curves corresponding to the uncorrected approximations. This indicates that Assumption~\ref{ass:score_stability} holds for small values of $\epsilon$ and $\nu$, leading to favorable coverage properties (via Theorem~\ref{thm:validity_under_stability}) for the tournament-corrected method; in contrast, the first stability condition in~\eqref{eq:stab_stronger} only holds at much larger values of $\epsilon$ and $\nu$, meaning that the uncorrected approximations may not lead to coverage.

\begin{figure}[ht!]
    \centering
    \includegraphics[width = \linewidth]{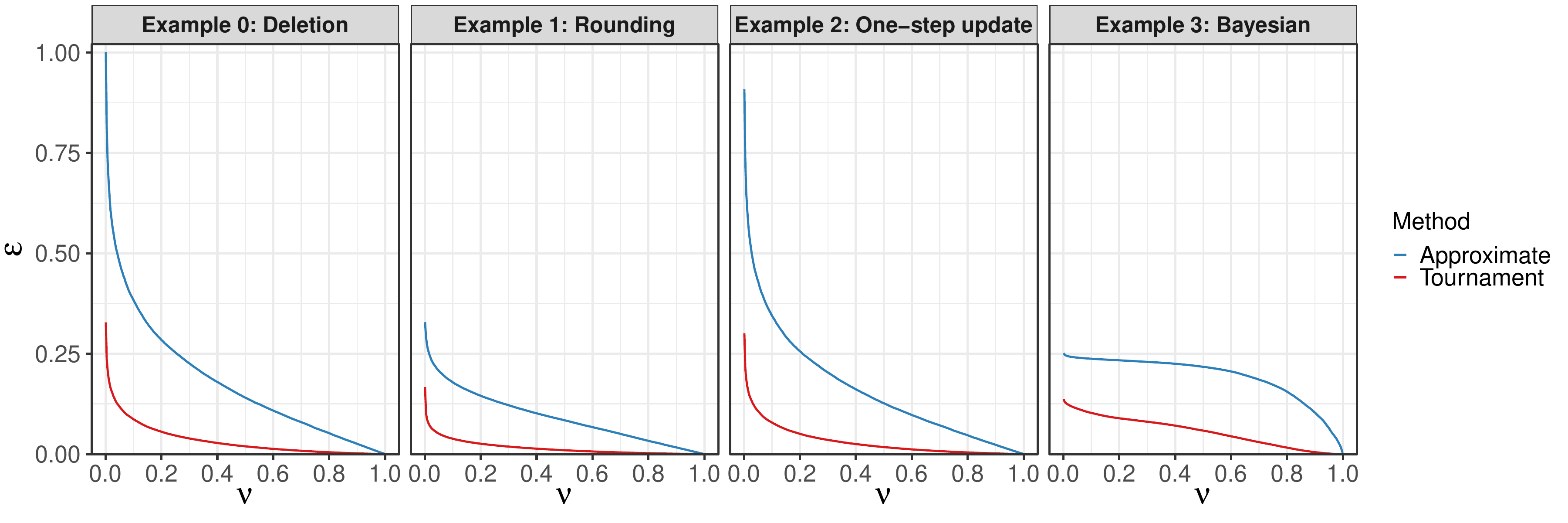}
    \caption{Comparison of the stability properties for the uncorrected approximation (see~\eqref{eq:stab_stronger}) versus the tournament-corrected approximation (see Assumption~\ref{ass:score_stability}).}
    \label{fig:stability_corrected_vs_uncorrected}
\end{figure}

\section{Discussion}\label{sec:discussion}
In this work, we develop a principled way of approximating full conformal prediction with provable distribution-free coverage guarantees. While there exist quite a few methods that are used as an approximation of full conformal, such as using a grid of $\mathcal{Y}$ values or adding in the test point via a one-step model update instead of refitting the model for each value of $y$, in general such approximations do not offer assumption-free theoretical coverage guarantees, and can fail in unstable regimes. Our proposed method offers a \emph{tournament correction} of these approximations, which ensures at least $1-2\alpha$ coverage with no additional assumptions. 

\emph{Limitations and future work.} It is important to note that implementing the tournament correction will typically increase computational cost: namely, $n$ times the cost incurred by their approximate counterparts. In Appendix~\ref{app:bayesian_computation_shortcut}, we discuss how we can get around this issue for the Bayesian case by using rejection sampling techniques, however, in the general case this may not always be possible. \ To address this, we may also define $K$-fold versions of the tournament correction (analogous to $K$-fold cross-conformal prediction, for the case of data deletion, i.e., Example 0). 
We leave the question of such extensions to future work.

\subsection*{Acknowledgements}
A.B. and R.F.B. were partially supported by the National Science Foundation via grant DMS-2023109. R.F.B. was partially supported by the Office of Naval Research via grant N00014-24-1-2544.

\bibliographystyle{plainnat}
\bibliography{ref}


\appendix
\numberwithin{theorem}{section}
\renewcommand{\thetheorem}{\Alph{section}.\arabic{theorem}}

\section{Proofs of theorems}\label{app:proofs_theorems}

\subsection{Proof of Theorem~\ref{thm:validity}}\label{app:proof_validity}

The proof follows the core idea of proof of \citet[Theorem 1]{barber2021predictive}. We will construct a $(n+1)\times(n+1)$ matrix $A$, as a function of the data, such that $A$ is a \emph{tournament matrix}, meaning that $A_{ij}\in\{0,1\}$, and $A_{ij}+A_{ji}\leq 1$, for all $i,j$ (that is, $A_{ij}$ can be viewed as an indicator of whether team $i$ wins its game against team $j$---and thus we cannot have both $A_{ij}=1$ and $A_{ji}=1$). We will establish that the coverage event can be determined by the $(n+1)$st row of $A$, with
\begin{equation}\label{eqn:tourn_step1}
    Y_{n+1}\not\in\widehat{C}^{\textnormal{tourn}}_{n,\alpha}(X_{n+1}) \ \Longleftrightarrow \ \sum_{i=1}^{n+1}A_{n+1,i} \geq (1-\alpha)(n+1),
\end{equation}
and that $A$ satisfies a row/column-wise exchangeability property,
\begin{equation}\label{eqn:tourn_step2}
A \stackrel{\textnormal{d}}{=} \Pi A \Pi^\top \textnormal{ for any permutation matrix $\Pi\in\{0,1\}^{(n+1)\times(n+1)}$.}
\end{equation}
As in the proof of \citet[Theorem 1]{barber2021predictive}, these properties are sufficient to prove that
\[\mathbb P(Y_{n+1}\in\widehat{C}^{\textnormal{tourn}}_{n,\alpha}(X_{n+1})) \geq 1-2\alpha,\]
as desired: to summarize the proof idea, this is because the bound
\[\sum_{j=1}^{n+1}\mathbf 1\left\{ \sum_{i=1}^{n+1} A_{j,i} \geq (1-\alpha)(n+1)\right\} \leq 2\alpha(n+1)\]
must hold \emph{deterministically} for any tournament matrix (i.e., it is impossible for more than $2\alpha(n+1)$ of the teams to win $\geq (1-\alpha)(n+1)$ of their games, in any tournament), while the exchangeability property~\eqref{eqn:tourn_step2} ensures that 
\[\mathbb P\left(\sum_{i=1}^{n+1} A_{n+1,i} \geq (1-\alpha)(n+1)\right) = \mathbb P\left(\sum_{i=1}^{n+1} A_{j,i} \geq (1-\alpha)(n+1)\right)\textnormal{ for all $j$}.\]

Now we construct the tournament matrix $A$, and verify that~\eqref{eqn:tourn_step1} and~\eqref{eqn:tourn_step2} both hold. First we define a matrix of scores, $S\in\mathbb R^{(n+1)\times(n+1)}$, with
\[
    \begin{cases}
        S_{ij} := s_{\mathrm{approx}}(Z_i;\mathcal{D}_{[n+1]\setminus\{i,j\}} + \{Z_i,Z_j\}), & i \neq j,\\
        S_{ii} := +\infty, & i=j.
    \end{cases}
    \]
By definition of the tournament-corrected prediction set~\eqref{eq:tournament_corrected_methods}, 
\[Y_{n+1}\not\in\widehat{C}^{\textnormal{tourn}}_{n,\alpha}(X_{n+1}) \ \Longleftrightarrow \ \sum_{i=1}^n \mathbf 1\{S_{n+1,i}>S_{i,n+1}\}\geq  (1-\alpha)(n+1).\]
Next define the matrix $A=A(S)$ with entries
\[A_{ij} = \mathbf 1\{S_{ij}> S_{ji}\}.\]
By construction, $A$ is a tournament matrix, and satisfies~\eqref{eqn:tourn_step1}.
Moreover, by exchangeability of the data together with the fact that $s_{\mathrm{approx}}(z;\mathcal D+\{z',z''\})$ is assumed to be symmetric in $\mathcal D$, we have
\[S \stackrel{\textnormal{d}}{=} \Pi S \Pi^\top\]
for any permutation matrix $\Pi$. This row/column-wise exchangeability property is then inherited by $A=A(S)$, since
\[A(S) \stackrel{\textnormal{d}}{=} A(\Pi S\Pi^\top) = \Pi A(S) \Pi^\top.\]
This verifies~\eqref{eqn:tourn_step2}, and thus completes the proof.

\subsubsection{Extension to randomized scores}\label{app:extension_random_scores}
We next extend the result of Theorem~\ref{thm:validity} to the setting of a randomized score function, to accommodate examples such as Bayesian PPD (Example 3). 

Let $\xi_{n+1,i}\in[0,1]$ and $\xi_{i,n+1}\in[0,1]$ denote random variables (i.e., seeds used for randomization), drawn independently of the data. We now define the tournament-corrected prediction set as
\begin{equation}\begin{split}
     \widehat C_{n,\alpha}^{\mathrm{tourn}}(X_{n+1}) &:= \Bigg\{y:\sum_{i=1}^n\mathbf{1}\Big\{ s_{\mathrm{approx}}(Z_{n+1}^y;\mathcal{D}_{n\setminus i} + \{Z_i,Z_{n+1}^y\};\xi_{n+1,i})  \\ 
     & > s_{\mathrm{approx}}(Z_i;\mathcal{D}_{n\setminus i} + \{Z_i,Z_{n+1}^y\};\xi_{i,n+1})\Big\} <(1-\alpha)(n+1)\Bigg\},
\end{split}\label{eq:tournament_corrected_methods_randomized}\end{equation}
where now both the score $s_{\mathrm{approx}}(Z_{n+1}^y;\mathcal{D}_{n\setminus i} + \{Z_i,Z_{n+1}^y\};\xi_{n+1,i})$ for the test point, and the score $s_{\mathrm{approx}}(Z_i;\mathcal{D}_{n\setminus i} + \{Z_i,Z_{n+1}^y\};\xi_{i,n+1})$ for the $i$th training point, use the additional argument of a random seed (which permits randomization, e.g., a score constructed via random samples drawn from a posterior, as for Bayesian PPD). 

We will need to assume that the random seeds $\xi_{n+1,i}$ and $\xi_{i,n+1}$ can be taken to be entries of a matrix $\xi\in[0,1]^{(n+1)\times(n+1)}$, satisfying row/column-wise exchangeability:
\begin{equation}\label{eqn:random_seed}
    \xi \stackrel{\textnormal{d}}{=}\Pi \xi\Pi^\top\textnormal{ for any permutation matrix $\Pi\in\{0,1\}^{(n+1)\times(n+1)}$.}
\end{equation}
For instance, this holds if the $2n$ same random seeds $\xi_{n+1,1},\dots,\xi_{n+1,n},\xi_{1,n+1},\dots,\xi_{n,n+1}$ are all equal (e.g., all equal to the same random variable $\xi\sim\textnormal{Unif}[0,1]$), or alternatively, are i.i.d.\ draws from $\textnormal{Unif}[0,1]$.
\begin{theorem}\label{thm:validity_randomized}
    In the setting of Theorem~\ref{thm:validity}, now consider the randomized construction as in~\eqref{eq:tournament_corrected_methods_randomized}. Assume also that the random seeds satisfy~\eqref{eqn:random_seed} and are independent of the data. Then
\[\mathbb{P}(Y_{n+1} \in \widehat C(X_{n+1})) \geq 1 - 2\alpha.\]
\end{theorem}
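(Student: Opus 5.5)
The plan is to rerun the tournament-matrix argument from the proof of Theorem~\ref{thm:validity}, with the score matrix $S$ now built from the random seeds. Define $S\in\mathbb{R}^{(n+1)\times(n+1)}$ by
\[S_{ij} := s_{\mathrm{approx}}(Z_i;\mathcal{D}_{[n+1]\setminus\{i,j\}} + \{Z_i,Z_j\};\xi_{ij}) \quad (i\neq j), \qquad S_{ii}:=+\infty,\]
and set $A_{ij} = \mathbf 1\{S_{ij}>S_{ji}\}$. Since $A_{ij}+A_{ji}\le 1$ holds deterministically, $A$ is a tournament matrix. Reading off the last row and column of $S$ reproduces exactly the comparisons in~\eqref{eq:tournament_corrected_methods_randomized} at $y=Y_{n+1}$. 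Both the test-point seed $\xi_{n+1,i}$ and the training-point seed $\xi_{i,n+1}$ appear in the right places. Hence~\eqref{eqn:tourn_step1} holds: miscoverage occurs if and only if $\sum_i A_{n+1,i}\ge(1-\alpha)(n+1)$.

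The key step is the joint exchangeability~\eqref{eqn:tourn_step2}, and this is where the seeds need care. Write $S = F(Z_1,\dots,Z_{n+1},\xi)$ for a deterministic map $F$. Fix a permutation $\sigma$ with matrix $\Pi$, and define $Z^\sigma=(Z_{\sigma(1)},\dots,Z_{\sigma(n+1)})$ and $\xi^\sigma = \Pi\xi\Pi^\top$, with entries $\xi_{\sigma(i)\sigma(j)}$. Because $s_{\mathrm{approx}}$ is symmetric in its dataset argument $\mathcal D$ and in the pair $\{z',z''\}$, the $(i,j)$ entry of $F(Z^\sigma,\xi^\sigma)$ equals
\[s_{\mathrm{approx}}(Z_{\sigma(i)};\mathcal D_{[n+1]\setminus\{\sigma(i),\sigma(j)\}}+\{Z_{\sigma(i)},Z_{\sigma(j)}\};\xi_{\sigma(i)\sigma(j)}) = S_{\sigma(i)\sigma(j)}.\]
Thus $F(Z^\sigma,\xi^\sigma) = \Pi S\Pi^\top$ deterministically. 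Three facts then combine: the data are exchangeable, the seeds are independent of the data, and~\eqref{eqn:random_seed} holds. Together these give $(Z^\sigma,\xi^\sigma)\stackrel{\textnormal{d}}{=}(Z,\xi)$, so $\Pi S\Pi^\top \stackrel{\textnormal{d}}{=} S$. Since $A(\Pi S\Pi^\top)=\Pi A(S)\Pi^\top$, the same property passes to $A$.

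The conclusion follows as before. Any tournament has at most $2\alpha(n+1)$ rows with $\ge(1-\alpha)(n+1)$ wins, since such teams would together need more wins than the available games among and outside them allow; this is the deterministic counting bound of \citet{barber2021predictive}. By exchangeability every row has the same probability of being such a row. Therefore the probability for row $n+1$ is at most $2\alpha$.

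The main obstacle is one bookkeeping check. Permuting the seed matrix must align with permuting the data exactly as $\xi_{ij}\mapsto\xi_{\sigma(i)\sigma(j)}$. This is why~\eqref{eqn:random_seed} is stated as row/column exchangeability rather than, say, exchangeability of the seed entries alone. Only the $2n$ seeds touching index $n+1$ are used by the method. Any seed matrix satisfying~\eqref{eqn:random_seed} that extends them suffices, for example all entries equal or all i.i.d.
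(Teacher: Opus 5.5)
Your proposal is correct and follows the paper's proof. You use the same seeded score matrix $S_{ij}=s_{\mathrm{approx}}(Z_i;\mathcal{D}_{[n+1]\setminus\{i,j\}}+\{Z_i,Z_j\};\xi_{ij})$ and derive joint exchangeability of $S$ from exchangeability of the data, independence of the seeds, and \eqref{eqn:random_seed}, then apply the tournament counting argument of Theorem~\ref{thm:validity}. You also make explicit the identity $F(Z^\sigma,\xi^\sigma)=\Pi S\Pi^\top$ and the counting bound, which the paper leaves as ``proceeds exactly as before''; one small correction is that symmetry in the pair $\{z',z''\}$ is actually needed when matching $S_{n+1,i}$ to the method's ordering $\{Z_i,Z_{n+1}^y\}$, not in the permutation identity.
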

\begin{proof}
    The proof is essentially identical to that of Theorem~\ref{thm:validity}. We define a score matrix $S$ with entries
    \[
    \begin{cases}
        S_{ij} := s_{\mathrm{approx}}(Z_i;\mathcal{D}_{[n+1]\setminus\{i,j\}} + \{Z_i,Z_j\}; \xi_{ij}), & i \neq j,\\
        S_{ii} := +\infty, & i=j,
    \end{cases}
    \]
    and observe that $S$ satisfies row/column-wise exchangeability by assumption on the data and the random seeds. The rest of the proof proceeds exactly as before.
\end{proof}

\subsection{Proof of Theorem~\ref{thm:validity_under_stability}} \label{app:proof_validity_under_stability}
This proof follows a similar argument as \citet[Theorem 5]{barber2021predictive}, which proves a related result for the setting of leave-one-out cross-conformal or cross-validation methods.

For \(i=1,\dots,n\), define
\[
R_i^y
:=
s_{\mathrm{approx}}\bigl(Z_i;\mathcal D_{[n+1]\setminus \{i\}}^y+\{Z_i\}\bigr),
\]
and define
\[
R_{n+1}^y
:=
s_{\mathrm{approx}}\bigl(Z_{n+1}^y;\mathcal D_n+\{Z_{n+1}^y\}\bigr).
\]
Thus, when \(y=Y_{n+1}\), writing \(R_i:=R_i^{Y_{n+1}}\), we have
\[
R_i=s_{\mathrm{approx}}\bigl(Z_i;\mathcal D_{[n+1]\setminus \{i\}}+\{Z_i\}\bigr),
\qquad i=1,\dots,n+1.
\]

Since \(Z_1,\dots,Z_{n+1}\) are exchangeable and
\(s_{\mathrm{approx}}(z;\mathcal D+\{z\})\) is symmetric in \(\mathcal D\),
the scores \(R_1,\dots,R_{n+1}\) are exchangeable. Hence the oracle conformal set
\[
\widetilde C_{n,\alpha_0}(X_{n+1})
:=
\left\{
y:\;
\sum_{i=1}^n
\mathbf 1\{R_{n+1}^y>R_i^y\}
<
(1-\alpha_0)(n+1)
\right\}
\]
has coverage at least \(1-\alpha_0\), i.e.,
\begin{equation}\label{eq:oracle_cov_new}
\mathbb P\left(Y_{n+1}\not\in \widetilde C_{n,\alpha_0}(X_{n+1})\right) = \mathbb P\left(
\sum_{i=1}^n \mathbf 1\{R_{n+1}>R_i\}\ge (1-\alpha_0)(n+1)
\right)
\le \alpha_0.
\end{equation}

Now fix \(i\in[n]\), and define
\[
\Delta_i
:=
\left|
s_{\mathrm{approx}}\bigl(Z_{n+1};\mathcal D_{n\setminus i}+\{Z_{n+1},Z_i\}\bigr)
-
R_{n+1}
\right|.
\]
By Assumption~\ref{ass:score_stability},
\[
\mathbb P(\Delta_i\le \epsilon)\ge 1-\nu.
\]

Next define
\[
\Delta_i'
:=
\left|
s_{\mathrm{approx}}\bigl(Z_i;\mathcal D_{n\setminus i}+\{Z_i,Z_{n+1}\}\bigr)
-
R_i
\right|.
\]
By Assumption~\ref{ass:score_stability} (together with the assumption that the data is exchangeable),
\[
\mathbb P(\Delta_i'\le \epsilon)\ge 1-\nu.
\]
If we let
\[
G_i:=\{\Delta_i\le \epsilon\}\cap\{\Delta_i'\le \epsilon\},
\]
then by the union bound,
\begin{equation}\label{eq:Gi_bound_new}
\mathbb P(G_i^c)\le 2\nu.
\end{equation}

Now suppose that \(G_i\) holds and \(R_{n+1}\le R_i\). Then
\begin{align*}
s_{\mathrm{approx}}\bigl(Z_{n+1};\mathcal D_{n\setminus i}+\{Z_{n+1},Z_i\}\bigr)
&\le
R_{n+1}+\epsilon \\
&\le
R_i+\epsilon \\
&\le
s_{\mathrm{approx}}\bigl(Z_i;\mathcal D_{n\setminus i}+\{Z_i,Z_{n+1}\}\bigr)+2\epsilon.
\end{align*}
Hence
\begin{multline*}
\mathbf 1\left\{
s_{\mathrm{approx}}\bigl(Z_{n+1};\mathcal D_{n\setminus i}+\{Z_{n+1},Z_i\}\bigr)
>
s_{\mathrm{approx}}\bigl(Z_i;\mathcal D_{n\setminus i}+\{Z_i,Z_{n+1}\}\bigr)+2\epsilon
\right\}\\
\le
\mathbf 1\{R_{n+1}>R_i\}+\mathbf 1\{G_i^c\}.
\end{multline*}
Summing over \(i=1,\dots,n\),
\begin{align}
&\sum_{i=1}^n
\mathbf 1\left\{
s_{\mathrm{approx}}\bigl(Z_{n+1};\mathcal D_{n\setminus i}+\{Z_{n+1},Z_i\}\bigr)
>
s_{\mathrm{approx}}\bigl(Z_i;\mathcal D_{n\setminus i}+\{Z_i,Z_{n+1}\}\bigr)+2\epsilon
\right\}
\notag\\
&\qquad\le
\sum_{i=1}^n \mathbf 1\{R_{n+1}>R_i\}
+
\sum_{i=1}^n \mathbf 1\{G_i^c\}.
\label{eq:compare_counts_new}
\end{align}

Set \(\alpha_0:=\alpha+2\sqrt{\nu}\). Using \eqref{eq:compare_counts_new},
\begin{align*}
&\mathbb P\Bigg(
\sum_{i=1}^n
\mathbf 1\bigg\{
s_{\mathrm{approx}}\bigl(Z_{n+1};\mathcal D_{n\setminus i}+\{Z_{n+1},Z_i\}\bigr)\\&\hspace{1in}
>
s_{\mathrm{approx}}\bigl(Z_i;\mathcal D_{n\setminus i}+\{Z_i,Z_{n+1}\}\bigr)+2\epsilon
\bigg\}
\ge (1-\alpha)(n+1)
\Bigg)
\\
&\qquad\le
\mathbb P\left(
\sum_{i=1}^n \mathbf 1\{R_{n+1}>R_i\}\ge (1-\alpha_0)(n+1)
\right)
+
\mathbb P\left(
\sum_{i=1}^n \mathbf 1\{G_i^c\}\ge 2\sqrt{\nu}(n+1)
\right).
\end{align*}
The first term is at most \(\alpha_0\) by \eqref{eq:oracle_cov_new}. For the second term, by
\eqref{eq:Gi_bound_new} and Markov's inequality,
\[
\mathbb P\left(
\sum_{i=1}^n \mathbf 1\{G_i^c\}\ge 2\sqrt{\nu}(n+1)
\right)
\le
\frac{\sum_{i=1}^n \mathbb P(G_i^c)}{2\sqrt{\nu}(n+1)}
\le
\frac{2n\nu}{2\sqrt{\nu}(n+1)}
\le
\sqrt{\nu}.
\]
Therefore,
\[
\mathbb P\left(
Y_{n+1}\notin \widehat C_{n,\alpha}^{\mathrm{infl},2\epsilon}(X_{n+1})
\right)
\le
\alpha+2\sqrt{\nu}+\sqrt{\nu}
=
\alpha+3\sqrt{\nu},
\]
which completes the proof.

\section{Simulation details of Section~\ref{sec:simulations}}
\label{app:regression_simulation_details}

We compare all the four approximation schemes for full conformal prediction:
deletion, rounding, one-step update, and Bayesian add-one-in posterior
predictive approximation. For each scheme, we evaluate both the original
approximate method as described in Section~\ref{sec:background} and the
corresponding tournament-corrected method proposed in this paper in
Section~\ref{sec:tourn_correction}. For the non-Bayesian methods, the fitted
model is ordinary least squares without an intercept. When the design matrix is
rank-deficient or underdetermined, we use the Moore--Penrose pseudoinverse and the
score function is the absolute residual.

In all simulations, we use sample size $n=100$ nominal coverage level $1-\alpha = 0.9$,
and vary the dimension over $p\in\{20,25,30,\ldots,100\}$.
For each value of \(p\), we run \(100\) independent trials. 

For each method, empirical coverage is computed as
\[
\widehat{\mathrm{Cov}}
=
\frac{1}{B}
\sum_{b=1}^B
\mathbf 1\left\{
Y_{n+1}^{(b)}
\in
\widehat C^{(b)}(X_{n+1}^{(b)})
\right\},
\]
averaged over $B=100$ trials.
The average length is the average Lebesgue measure of the returned prediction set,
\[
\widehat{\mathrm{Length}}
=
\frac{1}{B}
\sum_{b=1}^B \textnormal{Leb}\big(
\widehat C^{(b)}(X_{n+1}^{(b)})
\big).
\]

\subsection{Example 0: Deletion}

\paragraph{Approximate.}For the deletion approximation, we use
\[
s_{\mathrm{approx}}^{\mathrm{delete}}(z;\mathcal D+\{z'\})
=
s(z;\mathcal D),
\]
where, for \(z=(x,y)\),
\[
s(z;\mathcal D)
=
|y-x^\top\widehat\beta(\mathcal D)|.
\]
Here \(\widehat\beta(\mathcal D)\) is the ordinary least-squares fit on
\(\mathcal D\). Thus the uncorrected deletion method fits OLS once on
\(\mathcal D_n\), computes
\[
R_i
=
s_{\mathrm{approx}}^{\mathrm{delete}}(Z_i;\mathcal D_n+\{Z_{n+1}^y\})
=
|Y_i-X_i^\top\widehat\beta(\mathcal D_n)|,
\qquad i\in[n],
\]
and
\[
s_{\mathrm{approx}}^{\mathrm{delete}}(Z_{n+1}^y;\mathcal D_n+\{Z_{n+1}^y\})
=
|y-X_{n+1}^\top\widehat\beta(\mathcal D_n)|.
\]

\paragraph{Tournament-corrected.} For the tournament-corrected deletion method, we use
\[
s_{\mathrm{approx}}^{\mathrm{delete}}(z;\mathcal D+\{z',z''\})
=
s(z;\mathcal D)
=
|y-x^\top\widehat\beta(\mathcal D)|.
\]
Thus, for each \(i\in[n]\), the pairwise comparison between \(Z_i\) and
\(Z_{n+1}^y\) is computed using the leave-one-out fit
\(\widehat\beta(\mathcal D_{n\setminus i})\).

\subsection{Example 1: Rounding}

\paragraph{Constructing the grid.} For the existing theory to work, the grid must be fixed before hand without looking at the data. This is so that the rounding operation treats the training and the test data symmetrically in the sense that all of them are rounded with respect to the same grid which is chosen independently of the training data. However, \cite{chen2018discretized} show that if the grid is chosen as a function of the range of the training $Y$ values (i.e., $\min_{i=1,\dots,n} Y_i$ and $\max_{i=1,\dots,n} Y_i$), this leads to at most $\tfrac{2}{n+1}$ additional loss of coverage. Hence, for the rounding approximation, in each trial
the grid is constructed from the observed training responses. Let
\[
Y_{\min}=\min_{1\le i\le n}Y_i,
\qquad
Y_{\max}=\max_{1\le i\le n}Y_i,
\qquad
R_Y=Y_{\max}-Y_{\min}.
\]
The grid consists of \(M=10\) equally spaced points $y^{(1)},\dots,y^{(M)}$ between
\[
Y_{\min}-0.02R_Y
\qquad\text{and}\qquad
Y_{\max}+0.02R_Y.
\]
We write \([y]\) for the nearest grid point to \(y\). The real line is
partitioned into the corresponding Voronoi cells, with cell boundaries given by
the midpoints between adjacent grid values.

\paragraph{Approximate.}For the uncorrected rounding method, the approximate score is
\[
s_{\mathrm{approx}}^{\mathrm{round}}(z;\mathcal D+\{z'\})
=
s\bigl(z;\mathcal D\cup\{(x',[y'])\}\bigr),
\qquad z'=(x',y').
\]
Using the absolute residual score, this is
\[
s_{\mathrm{approx}}^{\mathrm{round}}(z;\mathcal D+\{z'\})
=
|y-x^\top \widehat\beta(\mathcal D\cup\{(x',[y'])\})|.
\]
Thus, for each grid point \(g_m\), we fit OLS on
\[
\mathcal D_n\cup\{(X_{n+1},g_m)\}.
\]
For candidate values \(y\) whose rounded value is \([y]=y^{(m)}\), the score at the
candidate point is
\[
s_{\mathrm{approx}}^{\mathrm{round}}
(Z_{n+1}^y;\mathcal D_n+\{Z_{n+1}^y\})
=
|y-X_{n+1}^\top\widehat\beta(\mathcal D_n\cup\{(X_{n+1},y^{(m)})\})|,
\]
and the training scores are
\[
s_{\mathrm{approx}}^{\mathrm{round}}
(Z_i;\mathcal D_n+\{Z_{n+1}^y\})
=
|Y_i-X_i^\top\widehat\beta(\mathcal D_n\cup\{(X_{n+1},y^{(m)})\})|.
\]

\paragraph{Tournament-corrected.}For the tournament-corrected rounding method, we use the corrected score
\[
s_{\mathrm{approx}}^{\mathrm{round}}(z;\mathcal D+\{z',z''\})
=
s\bigl(z;\mathcal D\cup\{(x',[y'])\}\cup\{(x'',[y''])\}\bigr),
\]
where \(z=(x,y)\) and \(z'=(x',y')\) and \(z''=(x'',y'')\). With the absolute residual score,
\[
s_{\mathrm{approx}}^{\mathrm{round}}(z;\mathcal D+\{z',z''\})
=
|y-x^\top
\widehat\beta(\mathcal D\cup\{(x',[y'])\}\cup\{(x'',[y''])\})|.
\]
Therefore, for the pairwise comparison between \(Z_i\) and \(Z_{n+1}^y\), the
model is fit on
\[
\mathcal D_{n\setminus i}
\cup
\{(X_i,[Y_i])\}
\cup
\{(X_{n+1},[y])\}.
\]

\subsection{Example 2: One-step update implementation}
\label{app:one-step_details}

\paragraph{Approximate.} In the simulation, the fitted model is
linear, \(f_\theta(x)=x^\top\theta\), and
\[
\widehat\theta(\mathcal D)
=
\widehat\beta(\mathcal D)
=
\arg\min_{\theta\in\mathbb R^p}
\sum_{(X_j,Y_j)\in\mathcal D}
(Y_j-X_j^\top\theta)^2.
\]
The step size is $\eta=10$. For a single added point \(z'=(x',y')\), the one-step update used in the
simulation is
\[
\mathrm{Update}_{\eta}(\widehat f_{\mathcal D},z')
=
f_{\widetilde\theta},
\]
where
\[
\widetilde\theta
=
\widehat\theta(\mathcal D)
+
\frac{\eta}{|\mathcal D|+1}
x'
\left(y'-x'^\top\widehat\theta(\mathcal D)\right).
\]
This is the gradient-descent update for the squared-error loss, evaluated at
the OLS solution on \(\mathcal D\).

The uncorrected one-step score is
\[
s_{\mathrm{approx}}^{\mathrm{one-step}}(z;\mathcal D+\{z'\})
=
\left|
y-
\left[
\mathrm{Update}_{\eta}(\widehat f_{\mathcal D},z')
\right](x)
\right|,
\qquad z=(x,y).
\]
Thus, in the uncorrected method,
\[
s_{\mathrm{approx}}^{\mathrm{one-step}}
(Z_{n+1}^y;\mathcal D_n+\{Z_{n+1}^y\})
=
\left|
y-
\left[
\mathrm{Update}_{\eta}(\widehat f_{\mathcal D_n},Z_{n+1}^y)
\right](X_{n+1})
\right|,
\]
and
\[
s_{\mathrm{approx}}^{\mathrm{one-step}}
(Z_i;\mathcal D_n+\{Z_{n+1}^y\})
=
\left|
Y_i-
\left[
\mathrm{Update}_{\eta}(\widehat f_{\mathcal D_n},Z_{n+1}^y)
\right](X_i)
\right|.
\]
The approximate one-step prediction set is computed exactly as a union of
intervals. To do this, let
\[
\widehat\beta=\widehat\beta(\mathcal D_n),
\qquad
d=X_{n+1}^\top\widehat\beta,
\qquad
e_i=Y_i-X_i^\top\widehat\beta.
\]
For \(t=y-d\), define
\[
\kappa_i=\frac{\eta}{n+1}X_i^\top X_{n+1},
\qquad
\kappa_{n+1}=\frac{\eta}{n+1}\|X_{n+1}\|_2^2.
\]
Then
\[
s_{\mathrm{approx}}^{\mathrm{one-step}}
(Z_{n+1}^y;\mathcal D_n+\{Z_{n+1}^y\})
=
|1-\kappa_{n+1}|\,|t|,
\]
and
\[
s_{\mathrm{approx}}^{\mathrm{one-step}}
(Z_i;\mathcal D_n+\{Z_{n+1}^y\})
=
|e_i-\kappa_i t|.
\]
The only points where the rank of the test score among the training scores can
change are the solutions of
\[
|1-\kappa_{n+1}|\,|t|
=
|e_i-\kappa_i t|,
\qquad i\in[n],
\]
together with \(t=0\). Equivalently, the breakpoints are
\[
y=d,
\qquad
y=d+\frac{e_i}{\kappa_i+|1-\kappa_{n+1}|},
\qquad
y=d+\frac{e_i}{\kappa_i-|1-\kappa_{n+1}|},
\]
whenever the denominators are nonzero. The simulation sorts these breakpoints,
evaluates the approximate conformal acceptance condition on each induced
interval, and returns the union of accepted intervals. Coverage is evaluated by
checking whether \(Y_{n+1}\) lies in this union, and length is the total length
of the union.

\paragraph{Tournament-corrected.} For the tournament-corrected one-step method, the update uses both points in
the pairwise comparison. Namely, for \(z=(x,y)\) and \(z'=(x',y')\), we use
\[
s_{\mathrm{approx}}^{\mathrm{one-step}}(z;\mathcal D+\{z',z''\})
=
\left|
y-
\left[
\mathrm{Update}_{\eta}(\widehat f_{\mathcal D},\{z',z''\})
\right](x)
\right|,
\]
where
\[
\mathrm{Update}_{\eta}(\widehat f_{\mathcal D},\{z',z''\})
=
f_{\widetilde\theta},
\]
and
\[
\widetilde\theta
=
\widehat\theta(\mathcal D)
+
\frac{\eta}{|\mathcal D|+2}
\left[
x'\left(y'-x'{}^\top\widehat\theta(\mathcal D)\right)
+
x''\left(y''-{x''}^\top\widehat\theta(\mathcal D)\right)
\right].
\]
For the comparison between \(Z_i\) and \(Z_{n+1}^y\), we take
\(\mathcal D=\mathcal D_{n\setminus i}\). Let
\[
\widehat\beta_{-i}
=
\widehat\beta(\mathcal D_{n\setminus i}),
\qquad
r_i=Y_i-X_i^\top\widehat\beta_{-i},
\qquad
d_i=X_{n+1}^\top\widehat\beta_{-i}.
\]
Also define
\[
a_i=\frac{\eta}{n+1}X_i^\top X_{n+1},
\qquad
g_i=\frac{\eta}{n+1}\|X_i\|_2^2,
\qquad
b=\frac{\eta}{n+1}\|X_{n+1}\|_2^2.
\]
Writing \(t_i=y-d_i\), the two scores in the pairwise comparison are
\[
s_{\mathrm{approx}}^{\mathrm{one-step}}
(Z_{n+1}^y;\mathcal D_{n\setminus i}+\{Z_i,Z_{n+1}^y\})
=
\left|
(1-b)t_i-a_i r_i
\right|,
\]
and
\[
s_{\mathrm{approx}}^{\mathrm{one-step}}
(Z_i;\mathcal D_{n\setminus i}+\{Z_i,Z_{n+1}^y\})
=
\left|
(1-g_i)r_i-a_i t_i
\right|.
\]
The corrected one-step candidate \(y\) is accepted (i.e., included in the prediction set) if
\[
\sum_{i=1}^n
\mathbf 1\left\{
s_{\mathrm{approx}}^{\mathrm{one-step}}
(Z_{n+1}^y;\mathcal D_{n\setminus i}+\{Z_i,Z_{n+1}^y\})
>
s_{\mathrm{approx}}^{\mathrm{one-step}}
(Z_i;\mathcal D_{n\setminus i}+\{Z_i,Z_{n+1}^y\})
\right\}
<
(1-\alpha)(n+1).
\]
The corrected one-step set is also computed exactly as a union of intervals.
For each \(i\), the breakpoints solve
\[
\left|
(1-b)(y-d_i)-a_i r_i
\right|
=
\left|
(1-g_i)r_i-a_i(y-d_i)
\right|.
\]
Equivalently, the two candidate breakpoints are
\[
y
=
d_i
+
\frac{(1-g_i+a_i)r_i}{1-b+a_i},
\qquad
y
=
d_i
+
\frac{(a_i+g_i-1)r_i}{1-b-a_i},
\]
whenever the denominators are nonzero. The simulation sorts all such
breakpoints, evaluates the tournament acceptance condition on each induced
interval, and reports the total length of the accepted union. Coverage is
evaluated by checking the same tournament condition at \(y=Y_{n+1}\).

\subsection{Example 3: Bayesian posterior predictive density}
\label{app:bayesian_details}

For the Bayesian experiment, we use the Gaussian linear-model likelihood
\[
Y\mid X,\theta
\sim
N(X^\top\theta,\sigma_{\mathrm{lik}}^2),
\qquad
\sigma_{\mathrm{lik}}=1.
\]
The prior is
\[
\theta\sim N(\mu_0,\Sigma_0),
\qquad
\mu_0=10\cdot \mathbf 1_p,
\qquad
\Sigma_0=I_p.
\]
Since the assumed likelihood and the prior are Gaussian, conjugacy holds and we can draw the posterior samples from the closed-form Gaussian
posterior. We use \(K=100\) Monte Carlo samples.

\paragraph{Approximate.} For the uncorrected add-one-in posterior predictive approximation, draw
\[
\theta_1,\ldots,\theta_K
\sim
\pi(\cdot\mid\mathcal D_n).
\]
Equivalently, the nonconformity score used in the comparisons is
\[
s_{\mathrm{approx}}^{\mathrm{Bayes}}(z;\mathcal D_n+\{z'\})
=
-
\sum_{k=1}^K
f_{\theta_k}(y\mid x)
f_{\theta_k}(y'\mid x').
\]
Therefore, for candidate \(y\),
\[
s_{\mathrm{approx}}^{\mathrm{Bayes}}
(Z_{n+1}^y;\mathcal D_n+\{Z_{n+1}^y\})
=
-\sum_{k=1}^K
f_{\theta_k}(y\mid X_{n+1})^2,
\]
and, for $i\in[n]$,
\[
s_{\mathrm{approx}}^{\mathrm{Bayes}}
(Z_i;\mathcal D_n+\{Z_{n+1}^y\})
=
-\sum_{k=1}^K
f_{\theta_k}(Y_i\mid X_i)
f_{\theta_k}(y\mid X_{n+1}).
\]

\paragraph{Tournament-corrected.} For the tournament-corrected Bayesian method, for each \(i\in[n]\), we draw
\[
\theta_{i,1},\ldots,\theta_{i,K}
\sim
\pi(\cdot\mid\mathcal D_{n\setminus i}).
\]
For the pairwise comparison between \(Z_i\) and \(Z_{n+1}^y\), using
samples from \(\pi(\cdot\mid\mathcal D_{n\setminus i})\),
\[
s_{\mathrm{approx}}^{\mathrm{Bayes}}
(Z_{n+1}^y;\mathcal D_{n\setminus i}+\{Z_i,Z_{n+1}^y\})
=
-\sum_{k=1}^K
f_{\theta_{i,k}}(y\mid X_{n+1})^2
f_{\theta_{i,k}}(Y_i\mid X_i),
\]
and
\[
s_{\mathrm{approx}}^{\mathrm{Bayes}}
(Z_i;\mathcal D_{n\setminus i}+\{Z_i,Z_{n+1}^y\})
=
-\sum_{k=1}^K
f_{\theta_{i,k}}(Y_i\mid X_i)^2
f_{\theta_{i,k}}(y\mid X_{n+1}).
\]

All Monte Carlo sums for the Bayesian methods are evaluated on the log scale
using the log-sum-exp trick. Coverage is evaluated directly at the true response
\(Y_{n+1}\). The reported lengths are computed by an adaptive grid
approximation. For both the methods the search range is taken as $[Y_{\min }-0.02\left(Y_{\max }-Y_{\min }\right),  Y_{\max }+0.02\left(Y_{\max }-Y_{\min }\right)]$.
The adaptive search starts with a coarse grid and repeatedly refines the left
and right endpoints by a shrink factor of \(10\), until the final grid
resolution is \(0.1\). If no accepted point is found on the search range, the
length is recorded as zero. Otherwise, the length is recorded as the difference
between the estimated right and left endpoints of the accepted interval.

\section{Real data implementation details}\label{app:realdata_details}
\subsection{Rounding: diabetes data}
\label{app:rounding_diabetes_implementation}

We follow the diabetes-data implementation of \citet{lee2025leave} as closely as possible. The dataset has \(n=442\) observations and \(p=10\) predictors. To mirror their preprocessing, we normalize the full dataset by
\[
X \leftarrow \frac{X-\bar X}{s_X \sqrt p},
\qquad
Y \leftarrow \frac{Y-\bar Y}{s_Y},
\]
where \(\bar X\) and \(s_X\) denote the columnwise empirical mean and population standard deviation of the full design matrix, and \(\bar Y\) and \(s_Y\) are the empirical mean and population standard deviation of the full response vector. In \citet{lee2025leave}, the real-data experiments are repeated \(100\) times with test-set size $100$, miscoverage level \(\alpha=0.1\). We use $M=100$ grid points for rounding.

The underlying prediction model in \citet{lee2025leave}'s implementation is a no-intercept linear predictor
\[
f_{\widehat\beta}(x)=x^\top \widehat\beta,
\]
where \(\widehat\beta\) solves
\[
\widehat\beta
\in
\arg\min_{\beta\in\mathbb R^p}
\left\{
\frac{1}{2n}\sum_{i=1}^n \phi_\epsilon\!\left(Y_i-X_i^\top\beta\right)
+
\frac{\lambda}{2}\|\beta\|_2^2
\right\},
\]
with \(\phi_\epsilon\) denoting the Huber loss,
\[
\phi_\epsilon(u)
=
\begin{cases}
u^2, & |u|\le \epsilon,\\[3pt]
2\epsilon |u|-\epsilon^2, & |u|>\epsilon.
\end{cases}
\]
We use the same tuning parameters as \citet{lee2025leave},
\[
\epsilon=1,
\qquad
\lambda=2.
\]
Their public code solves this optimization problem in \texttt{cvxpy}. In our R implementation, we solve the same convex objective using \texttt{optim} with the BFGS algorithm. The nonconformity score is the absolute residual
\[
s(z;\mathcal D)=|y-f_{\widehat\beta}(x)|.
\]

For the approximate rounding method, we use the same grid-search FullCP construction as in the public code of \citet{lee2025leave}. Given a training set and a test covariate \(X_{n+1}\), define the candidate-response grid
\[
\mathcal Y_{\mathrm{grid}}
=
\left\{
y^{(1)},\dots,y^{(100)}
\right\},
\]
where the \(100\) grid points are equally spaced between
\[
\min(Y_{\mathrm{train}})-\operatorname{sd}(Y_{\mathrm{train}})
\quad\text{and}\quad
\max(Y_{\mathrm{train}})+\operatorname{sd}(Y_{\mathrm{train}}).
\]
For each candidate \(y\in\mathcal Y_{\mathrm{grid}}\), we augment the training data with \(Z_{n+1}^y=(X_{n+1},y)\), refit the robust linear model on the augmented sample, and compute the \(n+1\) absolute residual scores. We then include \(y\) whenever the test-point score is no larger than the empirical \((1-\alpha)\)-quantile of these \(n+1\) scores.

For the tournament-corrected rounding method, we use the same candidate grid \(\mathcal Y_{\mathrm{grid}}\), the same robust linear model, and the same absolute-residual score, but replace the approximate full conformal check by the corrected pairwise comparison from Section~\ref{sec:tourn_correction}. Concretely, for each candidate \(y\in\mathcal Y_{\mathrm{grid}}\) and each \(i\in[n]\), we compute
\[
s_{\mathrm{approx}}\bigl(Z_{n+1}^y;\mathcal D_{n\setminus i}+\{Z_i,Z_{n+1}^y\}\bigr)
\qquad\text{and}\qquad
s_{\mathrm{approx}}\bigl(Z_i;\mathcal D_{n\setminus i}+\{Z_i,Z_{n+1}^y\}\bigr),
\]
where the approximation is of rounding type, with \(Y_i\) rounded to its nearest grid value and \(y\) restricted to the same candidate grid.

Finally, for both methods we report empirical coverage and average prediction-set length over \(100\) random train/test splits.

\subsection{Bayes: diabetes data}
\label{app:bayes_diabetes_implementation}

We use the diabetes regression data from the \texttt{lars} package, which contains $n=442$ observations and $p=10$ covariates. For each repetition, we randomly split the data into a $70\%$ training set and a $30\%$ test set. The covariates and response are standardized within each split using the training-set mean and standard deviation, and the same covariate standardization is then applied to the test set. All reported prediction-set lengths for this experiment are on the standardized response scale.

Following the sparse-regression experiment of \citet{fong2021conformal}, we use a Bayesian lasso model. Conditional on the covariates, the likelihood is
\[
Y_i \mid X_i,\beta_0,\beta,\sigma
\sim
N(\beta_0 + X_i^\top \beta,\sigma^2).
\]
The regression coefficients are given independent Laplace priors,
\[
\beta_j \mid b \sim \mathrm{Laplace}(0,b),
\qquad j=1,\dots,p,
\]
with hyperprior
\[
b \sim \mathrm{Gamma}(1,1).
\]
For the noise scale, we use
\[
\sigma \sim \mathrm{HalfNormal}(1).
\]
The intercept is not penalized. In the Stan implementation, we use a weak Gaussian prior
\[
\beta_0 \sim N(0,10^2),
\]
which is effectively diffuse on the standardized response scale and is used for numerical stability. This is the same Bayesian lasso specification used for the conformal Bayes comparison, up to this weak regularization of the intercept.

Posterior sampling is performed using \texttt{rstan}. For each posterior fit, we run $4$ chains with $3000$ iterations per chain, of which $1000$ are warmup iterations. Thus each fit produces $2000$ post-warmup samples per chain. We use the same settings for both the full-data posterior used by the approximate Bayesian method and the leave-one-out posteriors used by the corrected method. The initialization is chosen deterministically: the coefficients are initialized at zero, the intercept at the empirical mean of the standardized training responses.

Prediction-set length is computed using the same grid-based approximation as in the conformal Bayes implementation. For each test point, we evaluate membership on a one-dimensional grid of $100$ candidate response values,
\[
\mathcal G
=
\left\{
\min(Y_{\mathrm{train}})-2,\dots,\max(Y_{\mathrm{train}})+2
\right\},
\]
where the response is on the standardized scale. If $\Delta$ denotes the grid spacing and $A\subseteq \mathcal G$ is the set of accepted grid points, then the reported length is $|A|\Delta$.

Coverage is not evaluated by rounding the true test response to the grid. Instead, for each test point we evaluate the conformal rank condition directly at the observed standardized value $Y_{n+1}$. The final result is over $50$ independent train/test splits where for each repetition, we compute the average coverage and average length over the test points. 

\section{Coverage guarantee of approximate methods under stability}\label{app:validity_approx_methods}

For completeness, in this section we show that the uncorrected approximate prediction set
\(\widehat C_{n,\alpha}^{\mathrm{approx}}\) can also satisfy a near-\((1-\alpha)\)
coverage guarantee, but under a stronger stability condition. This result is
analogous to the stability result for the naive method in
\citet{barber2021predictive}; in the deletion case,
\(s_{\mathrm{approx}}(z;\mathcal D+\{z'\})=s(z;\mathcal D)\), the approximate method is exactly the naive method. Related results also appear in \citet{ndiaye2022stable,lee2025leave}.

We define the \(2\epsilon\)-inflated version of the $\widehat  C_{n,\alpha}^{\mathrm{approx}}$ as
\begin{multline*}
    \widehat C_{n,\alpha}^{\mathrm{approx},2\epsilon}(X_{n+1})
    =
    \Biggl\{
    y:
    \sum_{i=1}^n
    \mathbf 1 \Biggl\{
    s_{\mathrm{approx}}\bigl(
    Z_{n+1}^y;\mathcal D_n+\{Z_{n+1}^y\}
    \bigr) \\ 
    > s_{\mathrm{approx}}\bigl(
    Z_i;\mathcal D_n+\{Z_{n+1}^y\}
    \bigr)+2\epsilon
    \Biggr\}
    <
    (1-\alpha)(n+1) \Biggr\}.
\end{multline*}

\begin{theorem}[Coverage of the approximate method under stronger stability]
\label{thm:validity_approx_under_stability}
Suppose that \(\mathcal D_{n+1}=\{Z_1,\dots,Z_{n+1}\}\) is exchangeable and $s_{\textnormal{approx}}$ satisfies the stability conditions in~\eqref{eq:stab_stronger} for some \(\epsilon\ge 0\) and \(\nu\in[0,1]\), for a score function \(s(z;\mathcal D\) that is symmetric in $\mathcal D$. Then
\[
\mathbb P\left(
Y_{n+1}\in
\widehat C_{n,\alpha}^{\mathrm{approx},2\epsilon}(X_{n+1})
\right)
\ge
1-\alpha-3\sqrt{\nu}.
\]
\end{theorem}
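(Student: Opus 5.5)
The argument will mirror the proof of Theorem~\ref{thm:validity_under_stability}, comparing against full conformal prediction with the true score $s$ rather than against an oracle built from $s_{\mathrm{approx}}$. For $i=1,\dots,n+1$ define the oracle scores $R_i := s(Z_i;\mathcal D_{n+1})$, where $\mathcal D_{n+1}$ is the full dataset with the true $Y_{n+1}$. Since the data are exchangeable and $s(z;\mathcal D)$ is symmetric in $\mathcal D$, the vector $(R_1,\dots,R_{n+1})$ is exchangeable. The usual full conformal rank argument then gives, for any $\alpha_0$,
\[
\mathbb P\Big(\sum_{i=1}^n \mathbf 1\{R_{n+1}>R_i\}\ge(1-\alpha_0)(n+1)\Big)\le\alpha_0 .
\]
This holds because, deterministically, at most $\alpha_0(n+1)$ indices $j$ can strictly exceed at least $(1-\alpha_0)(n+1)$ of the other scores, and by exchangeability each index is equally likely to be one of them.

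Next, evaluate the approximate scores at $y=Y_{n+1}$. Write $\widehat R_{n+1}:=s_{\mathrm{approx}}(Z_{n+1};\mathcal D_n+\{Z_{n+1}\})$ and $\widehat R_i:=s_{\mathrm{approx}}(Z_i;\mathcal D_n+\{Z_{n+1}\})$. Set $\Delta:=|\widehat R_{n+1}-R_{n+1}|$ and $\Delta_i':=|\widehat R_i-R_i|$. By the two conditions in~\eqref{eq:stab_stronger}, $\mathbb P(\Delta>\epsilon)\le\nu$ and $\mathbb P(\Delta_i'>\epsilon)\le\nu$ for each $i$. Here I read the second condition as holding for every $i\in[n]$; if it is only stated for a generic $i$, exchangeability of $Z_1,\dots,Z_n$ together with symmetry of $s_{\mathrm{approx}}$ in $\mathcal D$ gives it for all $i$. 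Let $G_i:=\{\Delta\le\epsilon\}\cap\{\Delta_i'\le\epsilon\}$, so that $\mathbb P(G_i^c)\le 2\nu$. On $G_i$, if $R_{n+1}\le R_i$, then
\[
\widehat R_{n+1}\le R_{n+1}+\epsilon\le R_i+\epsilon\le \widehat R_i+2\epsilon .
\]
It follows that $\mathbf 1\{\widehat R_{n+1}>\widehat R_i+2\epsilon\}\le \mathbf 1\{R_{n+1}>R_i\}+\mathbf 1\{G_i^c\}$, and summing over $i\in[n]$ bounds the miscoverage count of $\widehat C^{\mathrm{approx},2\epsilon}_{n,\alpha}$ by the oracle count plus $\sum_i\mathbf 1\{G_i^c\}$.

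To finish, take $\alpha_0=\alpha+2\sqrt\nu$ and split the event in the usual way:
\[
\mathbb P(Y_{n+1}\notin\widehat C^{\mathrm{approx},2\epsilon}_{n,\alpha})\le \mathbb P\Big(\textstyle\sum_i\mathbf 1\{R_{n+1}>R_i\}\ge(1-\alpha_0)(n+1)\Big)+\mathbb P\Big(\textstyle\sum_i\mathbf 1\{G_i^c\}\ge 2\sqrt\nu(n+1)\Big).
\]
The first term is at most $\alpha+2\sqrt\nu$ by the oracle bound. By Markov's inequality the second term is at most $2n\nu/(2\sqrt\nu(n+1))\le\sqrt\nu$. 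Together these give the bound $1-\alpha-3\sqrt\nu$.

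The one delicate point is the exchangeability of the oracle scores. It requires $s$ to be computed on the single common dataset $\mathcal D_{n+1}$, which makes the full-conformal rank argument immediate; unlike Theorem~\ref{thm:validity_under_stability}, no pairwise or leave-one-out structure is needed here. Everything else is bookkeeping identical to the earlier proof.
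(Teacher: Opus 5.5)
Your proof is correct and is essentially the paper's argument. You compare against the exchangeable full-conformal scores $s(Z_j;\mathcal D_{n+1})$, use the stability events to get the deterministic $2\epsilon$ comparison, and then combine Markov's inequality with the rank bound. The only difference is bookkeeping: you fold the test-point event $\{|\widehat R_{n+1}-R_{n+1}|\le\epsilon\}$ into every $G_i$, as in the proof of Theorem~\ref{thm:validity_under_stability}, whereas the paper pulls this single event out once at cost $\nu$ and applies Markov only to the training-point events, which gives the slightly sharper $\alpha+2\sqrt\nu+\nu\le\alpha+3\sqrt\nu$.
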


\begin{proof}
Write
\[
A_{n+1}
:=
s_{\mathrm{approx}}(Z_{n+1};\mathcal D_n+\{Z_{n+1}\}),
\qquad
A_i
:=
s_{\mathrm{approx}}(Z_i;\mathcal D_n+\{Z_{n+1}\}),
\quad i\in[n],
\]
and define the corresponding full-conformal scores
\[
S_j:=s(Z_j;\mathcal D_{n+1}),
\qquad j=1,\dots,n+1.
\]
By exchangeability of \(\mathcal D_{n+1}\) and symmetry of \(s\), the vector
\((S_1,\dots,S_{n+1})\) is exchangeable.

By definition of the \(2\epsilon\)-inflated approximate set,
\[
Y_{n+1}
\notin
\widehat C_{n,\alpha}^{\mathrm{approx},2\epsilon}(X_{n+1})
\quad
\Longleftrightarrow
\quad
\sum_{i=1}^n
\mathbf 1\left\{
A_{n+1}>A_i+2\epsilon
\right\}
\ge
(1-\alpha)(n+1).
\]

Define the stability events
\[
E_{n+1}
:=
\left\{
|A_{n+1}-S_{n+1}|\le \epsilon
\right\},
\qquad
E_i
:=
\left\{
|A_i-S_i|\le \epsilon
\right\},
\quad i\in[n].
\]
On the event \(E_{n+1}\cap E_i\), if
\[
A_{n+1}>A_i+2\epsilon,
\]
then
\[
S_{n+1}
\ge
A_{n+1}-\epsilon
>
A_i+\epsilon
\ge
S_i.
\]
Therefore, on the event \(E_{n+1}\),
\[
\mathbf 1\left\{
A_{n+1}>A_i+2\epsilon
\right\}
\le
\mathbf 1\left\{
S_{n+1}>S_i
\right\}
+
\mathbf 1\{E_i^c\}.
\]
Summing over \(i\in[n]\), we get, on \(E_{n+1}\),
\[
\sum_{i=1}^n
\mathbf 1\left\{
A_{n+1}>A_i+2\epsilon
\right\}
\le
\sum_{i=1}^n
\mathbf 1\left\{
S_{n+1}>S_i
\right\}
+
\sum_{i=1}^n
\mathbf 1\{E_i^c\}.
\]
Therefore,
\begin{multline*}
    \mathbb P\left(
Y_{n+1}
\notin
\widehat C_{n,\alpha}^{\mathrm{approx},2\epsilon}(X_{n+1})
\right) 
\le
\mathbb P(E_{n+1}^c)
+
\mathbb P\left(
\sum_{i=1}^n \mathbf 1\{E_i^c\}
>
\sqrt{\nu}(n+1)
\right) \\ +
\mathbb P\left(
\sum_{i=1}^n
\mathbf 1\left\{
S_{n+1}>S_i
\right\}
\ge
(1-\alpha-\sqrt{\nu})(n+1)
\right).
\end{multline*}
By our stability assumptions, along with exchangeability of the data, we have $\mathbb P(E_i^c)\le \nu$ for all $i$. Therefore,
\[
 \mathbb P\left(\sum_{i=1}^n \mathbf 1\{E_i^c\}
>
\sqrt{\nu}(n+1)
\right)
\leq
\frac{
\sum_{i=1}^n \mathbb P(E_i^c)
}{
\sqrt{\nu}(n+1)
} \leq \frac{n\nu}{\sqrt{\nu}(n+1)}
\leq \sqrt{\nu},\]
where the first step follows by Markov's inequality.
Since
\((S_1,\dots,S_{n+1})\) is exchangeable, for any \(t\in\mathbb R\),
\[
\mathbb P\left(
\sum_{i=1}^n
\mathbf 1\{S_{n+1}>S_i\}
\ge (1-\alpha -\sqrt{\nu})(n+1)
\right)
\le
\alpha+\sqrt{\nu}.
\]
Combining these, we obtain
\[
\mathbb P\left(
Y_{n+1}
\notin
\widehat C_{n,\alpha}^{\mathrm{approx},2\epsilon}(X_{n+1})
\right)
\le
\nu+\sqrt{\nu}+\alpha+\sqrt{\nu}
\le
\alpha+3\sqrt{\nu},
\]
and the result follows.
\end{proof}

\section{Computational speedup for tournament-corrected Bayesian approximate PPD scores}\label{app:bayesian_computation_shortcut}

In this section, we consider the question of computational cost for the Bayesian PPD method (Example 3). In particular, it is often costly to draw samples from posterior distributions; here we examine how this cost plays a role in the tournament-corrected method.

To implement the (uncorrected) approximation of \citet{fong2021conformal} (see Example 3 in Section~\ref{sec:background}), we need to sample $K$ times from the posterior of $\theta$ given the dataset $\mathcal D_n$:
\[\theta_1,\dots,\theta_K \sim \pi(\cdot\mid \mathcal D_n).\]
On the other hand, a naive implementation of the tournament-corrected method for this example (see Example 3 in Section~\ref{sec:tournament_corrected_examples}) requires sampling
\[\theta_{1,i},\dots,\theta_{K,i} \sim \pi(\cdot\mid \mathcal D_{n\setminus i}).\]
In this construction, $K$ posterior samples are drawn for \emph{each} leave-one-out dataset $\mathcal D_{n\setminus i}$. While the randomized coverage guarantee of Theorem~\ref{thm:validity_randomized} ensures that this construction leads to $\geq 1-2\alpha$ coverage, this comes at a computational cost: an $n$-fold increase in sampling cost relative to the AOI-PPD method, i.e., the uncorrected approximation.

We now develop a different strategy that allows us to retain the coverage guarantee offered by the corrected method, while avoiding the $n$-fold increase in cost, incurred by drawing independent samples from each posterior $\pi(\cdot\mid \mathcal D_{n\setminus i})$. We will now see how we can use rejection sampling to avoid this increase in cost. We will need to assume that the likelihood is lower-bounded, $\inf_\theta f_\theta(y\mid x) \geq c(z)$ for some $c(z)>0$, for each data point $z=(x,y)$.
We will draw the posterior samples as follows:
\begin{itemize}
\item[] \hspace{-.25in} Bayesian PPD with shared sampling:
\item Define $k_i=0$ for all $i\in[n]$.
\item Repeat until $k_i = K$ for all $i\in[n]$:
\begin{itemize}
    \item Sample $\theta\sim \pi(\cdot\mid \mathcal{D}_n)$
    \item 
    For each $i\in[n]$ with $k_i<K$:
    \begin{itemize}
    \item Sample $V\sim\textnormal{Unif}[0,1]$.
        \item If $V\leq \frac{c(Z_i)}{f_\theta(Y_i\mid X_i)}$, then set $k_i\leftarrow k_i+1$
    and define $\theta_{k_i,i} = \theta$.
    \end{itemize}
\end{itemize}
\end{itemize}
In other words, we use the same stream of candidate draws $\theta$, and then perform rejection sampling for each $i\in[n]$ to define posterior samples $\theta_{k,i}\sim \pi(\cdot\mid \mathcal D_{n\setminus i})$.
We now need to see how this construction can be used within the framework of our coverage results: it is not immediately clear whether this sampling scheme might be treating the data points (training + test data) in a nonsymmetric way, violating exchangeability. 

Our next observation is that we can view the scheme above in a different way. Imagine that we have access to the test point, so that we can use the entire dataset $\mathcal D_{n+1}$ for sampling. Then we proceed as follows: 
\begin{itemize}
\item[] \hspace{-.25in}  Bayesian PPD with shared sampling---alternative representation:
\item Define $k_i=0$ for all $i\in[n]$.
\item Repeat until $k_i = K$ for all $i\in[n]$:
\begin{itemize}
    \item Sample $\theta\sim \pi(\cdot\mid \mathcal{D}_{n+1})$ and $U,V_1,\dots,V_n\stackrel{\textnormal{d}}{\sim}\textnormal{Unif}[0,1]$, independently.
    \item For each $i\in[n]$ with $k_i<K$:
    \begin{itemize}
        \item 
    If $U\leq \frac{c(Z_{n+1})}{f_\theta(Y_{n+1}\mid X_{n+1})}$ and $V_i\leq \frac{c(Z_i)}{f_\theta(Y_i\mid X_i)}$, then set $k_i\leftarrow k_i+1$
    and define $\theta_{k_i,i} = \theta$.
        \end{itemize}
\end{itemize}
\end{itemize}

But sampling $\theta\sim \pi(\cdot\mid \mathcal{D}_{n+1})$, and then proceeding only if the criterion $U\leq \frac{c(Z_{n+1})}{f_\theta(Y_{n+1}\mid X_{n+1})}$ is satisfied, is exactly equivalent to simply drawing $\theta\sim \pi(\cdot\mid \mathcal{D}_n)$ (i.e., this is rejection sampling, which removes the test data point $Z_{n+1}$ from the posterior). In other words, this is exactly equivalent to the sampling scheme defined above.

Now we will see how to apply Theorem~\ref{thm:validity_randomized}. Let $\xi\sim\textnormal{Unif}[0,1]$ be a random seed. Abusing notation, for any $\xi\in[0,1]$ we will write $\xi = (\xi',\xi'',\xi''')$ to denote `splitting' the random seed into three i.i.d.\ $\textnormal{Unif}[0,1]$ random seeds (e.g., by cycling through the random digits of $\xi$). Define a deterministic function $\texttt{Sample}_\pi(\mathcal D;\xi)$ such that
\[(\theta^{(1)},\theta^{(2)},\dots)=\texttt{Sample}_\pi(\mathcal D;\xi)\]
returns an infinite stream of i.i.d.\ draws $\theta\sim\pi(\cdot\mid \mathcal D)$ when the random seed is sampled as $\xi\sim\textnormal{Unif}[0,1]$. 
Similarly define a deterministic function $\texttt{Sample}_{\textnormal{Unif}}(\xi)$ such that
\[(V^{(1)},V^{(2)},\dots)=\texttt{Sample}_{\textnormal{Unif}}(\xi)\]
returns an infinite stream of i.i.d.\ draws $V\sim\textnormal{Unif}[0,1]$ when the random seed is sampled as $\xi\sim\textnormal{Unif}[0,1]$. 
Next define $\texttt{Sample}(\mathcal D, z,z';\xi,K)$ as follows:
\begin{itemize}
    \item Define $(\theta^{(1)},\theta^{(2)},\dots)=\texttt{Sample}_\pi(\mathcal D\cup \{z,z'\};\xi')$;
    \item Define $(U^{(1)},U^{(2)},\dots)=\texttt{Sample}_{\textnormal{Unif}}(\xi'')$ and $(V^{(1)},V^{(2)},\dots)=\texttt{Sample}_{\textnormal{Unif}}(\xi''')$;
    \item Let $i_1<\dots<i_K$ be the first $K$ indices $i$ for which $U^{(i)} \leq \frac{c(z)}{f_{\theta^{(i)}}(z)}$ and $V^{(i)} \leq \frac{c(z')}{f_{\theta^{(i)}}(z')}$;
    \item Return $\theta^{(i_1)},\dots,\theta^{(i_K)}$.
\end{itemize}
In other words, this returns a sample
\[(\theta_1,\dots,\theta_K) =\texttt{Sample}(\mathcal D, z,z';\xi,K), \]
which is a deterministic function, but for a random seed $\xi\sim\textnormal{Unif}[0,1]$ this is equivalent to a rejection sampling procedure (which draws from $\pi(\cdot\mid \mathcal D\cup\{z,z'\})$ and then uses rejection sampling to approximate removing data points $z,z'$). Note that, taking $\mathcal D = \mathcal D_{n\setminus i}$ and $z=Z_{n+1},z'=Z_i$, this is exactly equivalent to sampling scheme (B) for each $i\in[n]$.

Finally, define the score
\begin{equation}\label{eqn:randomized_score_bayes}s_{\textnormal{approx}}(z;\mathcal D+\{z',z''\};\xi) = -\sum_{k=1}^K f_{\theta_k}(y\mid x) \cdot f_{\theta_k}(y'\mid x') \cdot f_{\theta_k}(y''\mid x'')  \end{equation}
where
\[(\theta_1,\dots,\theta_K) = \texttt{Sample}(\mathcal D,z',z'';\xi,K).\]

Now we apply this to the augmented dataset $\mathcal D_{n+1}$. Let $\xi', \xi''_1,\dots,\xi''_{n+1}\stackrel{\textnormal{iid}}{\sim}\textnormal{Unif}[0,1]$, and define
\[\xi_{ij} = (\xi', \xi''_i, \xi''_j)\]
for $i,j\in[n+1]$. We can verify that the tournament-corrected method (with shared sampling, as defined above) is therefore equivalent to using the randomized score function~\eqref{eqn:randomized_score_bayes} with this random seed construction.
Since this randomized score function satisfies the conditions~\eqref{eqn:random_seed} required by Theorem~\ref{thm:validity_randomized}, this verifies theoretical validity.


\end{document}